\def\maxwidth{ %
  \ifdim\Gin@nat@width>\linewidth
    \linewidth
  \else
    \Gin@nat@width
  \fi
}
\definecolor{fgcolor}{rgb}{0.345, 0.345, 0.345}
\definecolor{shadecolor}{rgb}{.97, .97, .97}
\definecolor{messagecolor}{rgb}{0, 0, 0}
\definecolor{warningcolor}{rgb}{1, 0, 1}
\definecolor{errorcolor}{rgb}{1, 0, 0}
\newenvironment{knitrout}{}{} 
\theoremstyle{plain}
\newtheorem{theorem}{Theorem}[section]
\theoremstyle{definition}
\newtheorem{assumption}[theorem]{Assumption}
\theoremstyle{remark}
    \definecolor{orange}{cmyk}{0,0.4,0.8,0.2}
    \definecolor{darkorange}{rgb}{.71,0.21,0.01}
    \definecolor{darkgreen}{rgb}{.12,.54,.11}
    \definecolor{myteal}{rgb}{.26, .44, .56}
    \definecolor{gray}{gray}{0.45}
    \definecolor{lightgray}{gray}{.95}
    \definecolor{mediumgray}{gray}{.8}
    \definecolor{inputbackground}{rgb}{.95, .95, .85}
    \definecolor{outputbackground}{rgb}{.95, .95, .95}
    \definecolor{traceback}{rgb}{1, .95, .95}
    \definecolor{red}{rgb}{.6,0,0}
    \definecolor{green}{rgb}{0,.65,0}
    \definecolor{brown}{rgb}{0.6,0.6,0}
    \definecolor{blue}{rgb}{0,.145,.698}
    \definecolor{purple}{rgb}{.698,.145,.698}
    \definecolor{cyan}{rgb}{0,.698,.698}
    \definecolor{lightgray}{gray}{0.5}
    \definecolor{darkgray}{gray}{0.25}
    \definecolor{lightred}{rgb}{1.0,0.39,0.28}
    \definecolor{lightgreen}{rgb}{0.48,0.99,0.0}
    \definecolor{lightblue}{rgb}{0.53,0.81,0.92}
    \definecolor{lightpurple}{rgb}{0.87,0.63,0.87}
    \definecolor{lightcyan}{rgb}{0.5,1.0,0.83}
    \definecolor{incolor}{rgb}{0.0, 0.0, 0.5}
    \definecolor{outcolor}{rgb}{0.545, 0.0, 0.0}
\providecommand{\pqlUL}[3]{\funcitUL{q}{#1}{#2}{#3}}
\providecommand{\pql}[1]{\pqlUL{}{}{#1}}
\providecommand{\pqlsq}[1]{\pqlUL{2}{}{#1}}
\providecommand{\pQlUL}[3]{\funcitUL{Q}{#1}{#2}{#3}}
\providecommand{\pQl}[1]{\pQlUL{}{}{#1}}
\providecommand{\qlev}[1][{}]{\MATHIT{s}}
\providecommand{\qbnd}[1][{}]{\mathUL{B}{}{#1}}
\providecommand{\prvcsym}{c}
\providecommand{\cfncUL}[3]{\funcitUL{\prvcsym}{#1}{#2}{#3}}
\providecommand{\cfnc}[2][{\ssiz}]{\cfncUL{}{#1}{#2}}
\providecommand{\csqfnc}[2][{\ssiz}]{\cfncUL{2}{#1}{#2}}
\providecommand{\cfncp}[2][{\ssiz}]{\cfncUL{\prime}{#1}{#2}}
\providecommand{\csqfnc}[2][{\ssiz}]{\sfncUL{2}{#1}{#2}}
\providecommand{\prvbterm}[3]{\funcitUL{\vect{b}}{#1}{#2}{#3}}
\providecommand{\buterm}[1][{}]{\prvbterm{#1}{\ssiz}{\pvmu, \pvsig}}
\providecommand{\bterm}{\buterm[{}]}
\providecommand{\bsqterm}{\buterm[\trsym]\buterm}
\providecommand{\prvBterm}[3]{\funcitUL{\vect{B}}{#1}{#2}{#3}}
\providecommand{\Buterm}[1][{}]{\prvBterm{#1}{\ssiz}{\pvmu, \pvsig, \pfacsig}}
\providecommand{\Bterm}{\Buterm[{}]}
\providecommand{\sphere}[1]{\mathUL{\mathcal{S}}{#1}{}}
\providecommand{\spherep}{\sphere{\nlatf - 1}}
\providecommand{\fnorm}[1]{\funcitUL{f}{}{\mathcal{S}}{#1}}
\providecommand{\sportwfnc}[1]{\funcit{\sportw}{#1}}
\providecommand{\sportWfnc}[1]{\funcit{\sportW}{#1}}
\providecommand{\spang}{\MATHIT{\theta}}
\providecommand{\spangfnc}[1]{\funcit{\spang}{#1}}
\providecommand{\prskvecU}[1]{\vectUL{\eta}{#1}{}}
\providecommand{\prskvec}{\prskvecU{}}
\providecommand{\gramprskvec}{\prskvecU{\trsym}\prskvec}
\providecommand{\ogramprskvec}{\prskvec\prskvecU{\trsym}}
\providecommand{\Drv}{\Mtx{D}}
\providecommand{\prskMtxU}[1]{\MtxUL{H}{#1}{}}
\providecommand{\prskMtx}{\prskMtxU{}}
\providecommand{\gramprskMtx}{\prskMtxU{\trsym}\prskMtx}
\providecommand{\zvc}{\vect{z}}
\providecommand{\farcsin}[1]{\funcit{\arcsin}{#1}}
\providecommand{\farctan}[1]{\funcit{\arctan}{#1}}
\providecommand{\txtQual}{\txtSNR}
\providecommand{\txtKS}{Kolmogorov-Smirnov\xspace}
\begin{document}

\title{Bounds on Portfolio Quality}
\author{Steven E. Pav \thanks{\email{spav@alumni.cmu.edu} The author thanks
Ramakrishna Kakarala for sharing his research.}}

\maketitle

\begin{abstract}
The \txtQual of a portfolio of \nlatf assets, its expected return divided 
by its risk, is couched as an estimation problem on the sphere
\spherep.  When the portfolio is built using noisy data, the expected
value of the \txtQual is bounded from above via a \txtCR bound, 
for the case of Gaussian returns. The bound holds
for `biased' estimators, thus there appears to be no bias-variance
tradeoff for the problem of maximizing the \txtQual. 
An approximate distribution of the \txtQual for the \txtMP is 
given, and shown to be fairly accurate via Monte Carlo simulations, 
for Gaussian returns as well as more exotic returns distributions.
These findings imply that if the maximal population 
\txtSNR grows slower than the universe size to the 
\oneby{4} power, there may be no diversification benefit, 
rather expected \txtQual can \emph{decrease} with additional assets.
As a practical matter, this may explain why the
\txtMP is typically applied to small asset universes.
Finally, the theorem is expanded to cover more general models
of returns and trading schemes, including the conditional expectation
case where mean returns are linear in some observable
features, subspace constraints (\ie dimensionality reduction),
and hedging constraints.
\end{abstract}

\section{Introduction}

Given \nlatf assets with expected return \pvmu and covariance of return \pvsig,
the portfolio defined as 
\begin{equation}
\pportwopt \defeq \minvAB{\pvsig}{\pvmu},
\end{equation}
known, somewhat informally, as the `\txtMP',
plays a central role in portfolio 
theory. \cite{markowitz1952portfolio,brandt2009portfolio}
Up to scaling, it solves the classic mean-variance
optimization, as well as the 
(population) \txtSR maximization problem:
\begin{equation}
\max_{\pportw }
\frac{\trAB{\pportw}{\pvmu}}{\sqrt{\qform{\pvsig}{\pportw}}}.
\label{eqn:opt_port_I}
\end{equation}

In practice, the \txtMP has a tarnished reputation, and
is infrequently, if ever, used without some modification.
The unknown population parameters \pvmu and \pvsig must
be estimated from samples, resulting in a feasible
portfolio of dubious value. Michaud went so far as to call
mean-variance optimization, ``error maximization.'' \cite{michaud1989markowitz}  
In its stead, numerous portfolio construction methodologies have been proposed
to replace the \txtMP, 
some based on patching conjectured theoretical deficiencies, 
others relying on simple heuristics. \cite{demiguel2009optimal,tu2011markowitz,brandt2009portfolio}

Praticioners often resort to dimensionality reduction heuristics
to mitigate estimation error, effectively reducing the number 
of free variables in the portfolio optimization problem. 
One version of this tactic describes the returns of
dozens, or even hundreds, of equities as the linear combination
of a handful of `factor' returns (plus some `idiosyncratic' term); 
the portfolio problem is then couched as an optimization over factor
portfolios. If the population parameters were known with certainty,
shrinking the set of feasible portfolios would only result in
reducing the optimal portfolio utility. However, the population
parameters can typically only be weakly estimated, and dimensionality
reduction is common practice. 

In this paper, an upper bound is established on the expected value of
a feasible portfolio's \txtQual, defined to be the expected return of the portfolio
divided by it's risk, with return and risk measured using the (unknown)
population parameters, and with the ``expected value'' taken over realizations
of the sample used to estimate the portfolio. This bound balances the
`effect size,' $\sqrt{\ssiz \qiform{\pvsig}{\pvmu}},$ with the number of assets,
\nlatf, and justifies some form of dimensionality reduction. It
is established, for example, that if, by adding additional assets
to the investment universe, $\sqrt{\qiform{\pvsig}{\pvmu}}$ grows at a rate
slower than $\nlatf^{1/4}$, the upper bound on expected \txtQual
can decrease.

\section{Portfolio \txtQual}
\label{sec:portfolio_qual}

Let \vreti be the vector of relative returns of \nlatf
assets, with expectation \pvmu and covariance \pvsig.
A portfolio \sportw on these assets has expected return
\trAB{\sportw}{\pvmu} and variance \qform{\pvsig}{\sportw}. 
Define the \txtQual of the portfolio \sportw as the
\txtQual of the returns of \trAB{\sportw}{\vreti}:
\begin{equation}
\pql{\sportw} \defeq
\frac{\trAB{\sportw}{\pvmu}}{\sqrt{\qform{\pvsig}{\sportw}}}
\label{eqn:def_pql}
\end{equation}

One can think of the \txtQual as a kind of `quality' metric
on portfolios, as follows:
The \txtSR statistic of the future returns of \sportw are 
`stochastically monotonic' in the \txtQual as so defined, 
meaning that if $\pql{\sportw[1]} \le \pql{\sportw[2]}$ then the 
\txtSR of \trAB{\sportw[2]}{\vreti} 
(first order) stochastically dominates the 
\txtSR of \trAB{\sportw[1]}{\vreti}. 

Note that the portfolio \txtQual is bounded by the \txtQual achieved
by the population \txtMP, \pportwopt:
\begin{equation}
\abs{\pql{\sportw}} 
\le \psnropt \defeq \sqrt{\qiform{\pvsig}{\pvmu}} 
= \pql{\pportwopt} 
= \pql{\minvAB{\pvsig}{\pvmu}}.
\end{equation}

We can interpret portfolio \txtQual geometrically, in `risk space',
by introducing a risk transform:
\begin{equation}
\pql{\sportw} =
\frac{\trAB{\sportw}{\pvsig\minvAB{\pvsig}{\pvmu}}}{\sqrt{\qform{\pvsig}{\sportw}}}
=
\frac{\trAB{\wrapParens{\trchol{\pvsig}\sportw}}{\trchol{\pvsig}{\pportwopt}}}{\sqrt{\gram{\wrapParens{\trchol{\pvsig}\sportw}}}}.
\end{equation}
Now normalize by the maximum absolute value that \pql{\sportw} can take:
\begin{equation*}
\begin{split}
\frac{\pql{\sportw}}{\psnropt} 
&=
\frac{\trAB{\wrapParens{\trchol{\pvsig}\sportw}}{\trchol{\pvsig}{\pportwopt}}}{%
\sqrt{\gram{\wrapParens{\trchol{\pvsig}\sportw}}}
\sqrt{\gram{\wrapParens{\trchol{\pvsig}\pportwopt}}}},\\
&=
\trAB{\wrapParens{\frac{\trchol{\pvsig}\sportw}{\sqrt{\gram{\wrapParens{\trchol{\pvsig}\sportw}}}}}}{%
\wrapParens{\frac{\trchol{\pvsig}\pportwopt}{\sqrt{\gram{\wrapParens{\trchol{\pvsig}\pportwopt}}}}}},\\
&=
\trAB{\fnorm{\trchol{\pvsig}\sportw}}{\fnorm{\trchol{\pvsig}\pportwopt}},
\end{split}
\end{equation*}
where 
\begin{equation}
\fnorm{\vect{x}}\defeq \frac{\vect{x}}{\sqrt{\gram{\vect{x}}}}
\end{equation}
is the projection operator taking non-zero vector \vect{x} to the
unit sphere.
That is, \fracc{\pql{\sportw}}{\psnropt} can be viewed as the dot product
of two vectors on the unit sphere (assuming both \sportw and \pportwopt
are non-zero vectors), namely 
\fnorm{\trchol{\pvsig}\sportw} and \fnorm{\trchol{\pvsig}\pportwopt}.
Let \spang be the angle between 
\fnorm{\trchol{\pvsig}\sportw} and \fnorm{\trchol{\pvsig}\pportwopt}, and 
thus 
$\pql{\sportw} = \psnropt \cos \spang.$


In practice the portfolio \sportw is built using \ssiz \iid observations 
of the random variable \vreti. Denote these observations by the
\bby{\ssiz}{\nlatf} matrix \mreti, and, by abuse of notation, denote
the \emph{estimator} that gives \sportw for a given \mreti by
\sportwfnc{\mreti}. By the same abuse of notation, write 
\spangfnc{\mreti}. We will bound the expected value of
\sportwfnc{\mreti}. 


To appeal to a \txtCR bound, one must typically assume the estimator
is unbiased. For this problem a somewhat weaker condition suffices.
\begin{assumption}[Directional Independence]
Assume that
\begin{equation}
\label{eqn:sane_estimator}
\E{\fnorm{\trchol{\pvsig}\sportwfnc{\mreti}}} = 
\cfnc{\psnrsqopt} \fnorm{\trchol{\pvsig}\pportwopt}
+ \bterm,
\end{equation}
where \bterm is the `bias' term, which is orthogonal to
\fnorm{\trchol{\pvsig}\pportwopt}, and which may be an
arbitrary function
of \pvmu and \pvsig.

\end{assumption}

Note that by orthogonality of \bterm and \fnorm{\trchol{\pvsig}\pportwopt}, 
and linearity of the expectation, 
\begin{equation}
\E{\cos \spangfnc{\mreti}} = 
\E{\frac{\pql{\sportw}}{\psnropt}} =
\E{\trAB{\fnorm{\trchol{\pvsig}\sportwfnc{\mreti}}}{\fnorm{\trchol{\pvsig}\pportwopt}}}
= \cfnc{\psnrsqopt}.
\end{equation}
Thus $\abs{\cfnc{x}} \le 1$,
and we expect $\cfnc{x} \ge 0$ for a `sane' portfolio estimator.
Moreover, one expects $\cfnc{x} \to 0$ as $\ssiz x \to 0$, and
for non-zero $x$, $\cfnc[{\ssiz}]{x} \to 1$ as $\ssiz\to\infty$.

When \bterm is the zero vector, the estimator is a 
`parallel estimator' in Watson's terminology
\cite{ANZS:ANZS253}, or `unbiased' in the sense of Hendricks.
\cite{Hendriks1991245,Dutchmen1992}
Note that \eqnref{sane_estimator} is satisfied for 
any \emph{directionally equivariant} portfolio estimator,
\ie one which, for any orthonormal \Mtx{H}, 
($\gram{\Mtx{H}} = \eye[\nlatf] = \ogram{\Mtx{H}}$), one
has 
\begin{equation*}
\sportwfnc{\mreti\tr{\Mtx{H}}} = \Mtx{H}\sportwfnc{\mreti}.
\end{equation*}
However, one should recognize that not all portfolio estimators
satisfy this assumption. For example, consider an estimator that
never concentrates greater than $\nlatf^{-\half}$ proportion of its
total gross allocation in any one asset; this estimator does not exhibit
Directional Independence, since it can not capitalize when 
$\pportwopt=\psnropt\basev[1]$. Neither does the ``one over $N$ allocation'' 
estimator.  \cite{demiguel2009optimal}  

We must eliminate other `pathological' cases from consideration.
\begin{assumption}[Residual Independence]
Assume that the distribution of the residual
$$
\fnorm{\trchol{\pvsig}\sportwfnc{\mreti}} - \E{\fnorm{\trchol{\pvsig}\sportwfnc{\mreti}}}
$$
is independent of \trchol{\pvsig}\pportwopt.
\end{assumption}
This assumption prevents us from making false assertions about 
\eg the 1/$N$ allocation in the case where 
it happens to nearly equal \pportwopt.  \cite{demiguel2009optimal}  

Let $\vect{y}$ be a \nlatf-variate random variable. Then
\begin{align}
\nonumber\trace{\VAR{\vect{y}}} 
&= \trace{\E{\ogram{\wrapParens{\vect{y} - \E{\vect{y}}}}}},&\\
\nonumber &= \trace{\E{\ogram{\vect{y}}}} - \trace{\ogram{\E{\vect{y}}}},&\\
&= \E{\gram{\vect{y}}} - \gram{\E{\vect{y}}}.&
\end{align}
By \eqnref{sane_estimator}, and using orthogonality of 
\bterm and \fnorm{\trchol{\pvsig}\pportwopt}, we then have
\begin{align}
\nonumber
\trace{\VAR{\fnorm{\trchol{\pvsig}\sportwfnc{\mreti}}}} &= 
1 - \wrapParens{\csqfnc{\psnrsqopt} + \bsqterm}&\\
\label{eqn:var_bounds}
&\le 1 - \csqfnc{\psnrsqopt},&
\end{align}
We will bound the variance of 
$\fnorm{\trchol{\pvsig}\sportwfnc{\mreti}}$
by a \txtCR lower bound, thus establishing an upper bound on
\cfnc{\psnrsqopt}.

Define 
\begin{equation}
\prskvec \defeq \trchol{\pvsig}\pportwopt = \ichol{\pvsig}\pvmu.
\end{equation}
Note that $\gramprskvec = \qiform{\pvsig}{\pvmu} = \psnrsqopt$.
Using the \txtCR lower bound for the left hand side 
of \eqnref{var_bounds},
and then using the definition of \prskvec in the expectation, we have
\cite{tj_moore_ccrb}
\begin{equation}
\label{eqn:crb_one}
\oneby{\ssiz}\trace{\qoform{\iFishI[\prskvec]}{\Drv}}
\le 1 - \csqfnc{\gramprskvec},
\end{equation}
where
\begin{equation}
\Drv\defeq
{\dbyd{\cfnc{\gramprskvec}\frac{\prskvec}{\sqrt{\gramprskvec}}}{\prskvec}}.
\end{equation}
Here we take the derivative to follow the `numerator layout' convention, 
meaning a gradient is a row vector. This derivative takes the form
\begin{equation}
\label{eqn:Drv_form}
\Drv = \frac{\cfncp{\gramprskvec}}{\sqrt{\gramprskvec}}\ogramprskvec + 
\cfnc{\gramprskvec}\wrapParens{\frac{\eye}{\sqrt{\gramprskvec}} -
\frac{\ogramprskvec}{\gramprskvec^{\half[3]}}}.
\end{equation}

To compute the Fisher information, \FishI[\prskvec], we must fix the likelihood
of the returns, \vreti. While the normal distribution is a poor fit for asset
returns \cite{stylized_facts}, it is a convenient distribution to work with.

\begin{assumption}[Normal Returns]
Assume that \vreti are multivariate normally distributed,
$\vreti\sim\normlaw{\pvmu,\pvsig}$.
\end{assumption}

For multivariate normal returns, and conditional on \pvsig, the log likelihood
takes the form
\begin{equation}
\begin{split}
\log\FOOlik{}{\vreti}{\prskvec} &= c_1 - \half
\qiform{\pvsig}{\wrapParens{\vreti - \pvmu}},\\
&= \funcit{c}{\vreti} + \prskvecU{\trsym}\ichol{\pvsig}\vreti - \half
\gramprskvec,
\end{split}
\end{equation}
dropping the `nuisance parameters' from the likelihood function.
The Fisher Information is negative the expectation of the 
Hessian of the log likelihood with respect to \prskvec. 
In this case we have simply
\begin{equation}
\label{eqn:FisherI}
\FishI[\prskvec] =
- \E{\prby[2]{\log\FOOlik{}{\vreti}{\prskvec}}{\px[\prskvec]\px[\prskvecU{\trsym}]}}
= \eye[\nlatf].
\end{equation}
This radically simplifies the exposition, as the \txtCR bound of
\eqnref{crb_one} can now be expressed as 
\begin{equation}
\label{eqn:crb_two}
\oneby{\ssiz}\trace{\ogram{\Drv}}
\le 1 - \csqfnc{\gramprskvec}.
\end{equation}
Using the form of \Drv given in \eqnref{Drv_form}, and noting that the cross
terms are orthogonal, we have
\begin{equation}
\begin{split}
\trace{\ogram{\Drv}}
&=
\trace{\wrapBracks{\cfncp{\gramprskvec}}^2 \ogramprskvec +
\csqfnc{\gramprskvec}\wrapBracks{\frac{\eye}{\gramprskvec} 
- \frac{\ogramprskvec}{\wrapParens{\gramprskvec}^2}}},\\
&= \wrapBracks{\cfncp{\gramprskvec}}^2 \gramprskvec + 
\csqfnc{\gramprskvec}\frac{\nlatf - 1}{\gramprskvec},
\end{split}
\end{equation}
using the fact that $\trace{\ogram{\vect{y}}} = \gram{\vect{y}}$.
With \eqnref{crb_two}, this gives
\begin{equation}
\label{eqn:crb_three}
\wrapBracks{\cfncp{\gramprskvec}}^2 \gramprskvec + 
\csqfnc{\gramprskvec}\frac{\nlatf - 1}{\gramprskvec}
\le \ssiz\wrapParens{1 - \csqfnc{\gramprskvec}}.
\end{equation}
The term $\wrapBracks{\cfncp{\gramprskvec}}^2 \gramprskvec$ is 
non-negative, so we may discard it to get a coarser bound that 
does not involve the derivative of \cfnc{}:
\begin{equation}
\label{eqn:crb_four}
\csqfnc{\gramprskvec}\frac{\nlatf - 1}{\gramprskvec}
\le \ssiz\wrapParens{1 - \csqfnc{\gramprskvec}}.
\end{equation}
This yields
\begin{equation}
\label{eqn:crb_five}
\csqfnc{\gramprskvec} \le \frac{\ssiz\gramprskvec}{\nlatf - 1 +
\ssiz\gramprskvec},
\end{equation}
proving the following theorem.
\begin{theorem}
\label{theorem:qual_bound}
Let \sportwfnc{\mreti} be an estimator based on \ssiz \iid observations of
multivariate Gaussian returns, \mreti, satisfying the assumptions of
directional independence and residual independence. Then
\begin{equation}
\E{\pql{\sportwfnc{\mreti}}} 
\le \frac{\sqrt{\ssiz}\psnrsqopt}{\sqrt{\nlatf - 1 + \ssiz\psnrsqopt}}.
\end{equation}
\end{theorem}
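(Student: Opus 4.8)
The statement is the endpoint of the chain of inequalities built above, so the plan is to make each link explicit and then perform one final rescaling. First I would reduce the theorem to a bound on the scalar function \cfnc{}. By Directional Independence (\eqnref{sane_estimator}) together with the orthogonality of \bterm to \fnorm{\trchol{\pvsig}\pportwopt}, the expected normalized quality is exactly $\cfnc{\psnrsqopt}$, whence $\E{\pql{\sportwfnc{\mreti}}} = \psnropt\,\cfnc{\psnrsqopt} = \sqrt{\psnrsqopt}\,\cfnc{\psnrsqopt}$. It therefore suffices to bound $\cfnc{\psnrsqopt}$ from above and multiply through by \psnropt.

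Next I would control \cfnc{} by way of a variance. Applying the trace-variance identity to $\vect{y} = \fnorm{\trchol{\pvsig}\sportwfnc{\mreti}}$ gives $\trace{\VAR{\vect{y}}} = 1 - \csqfnc{\psnrsqopt} - \bsqterm \le 1 - \csqfnc{\psnrsqopt}$, as in \eqnref{var_bounds}; the key is that discarding the nonnegative \bsqterm keeps the inequality pointing the right way, so that bias cannot help. I would then lower-bound this trace-variance by a \txtCR bound for the estimator of the vector mean $\E{\vect{y}} = \cfnc{\gramprskvec}\prskvec/\sqrt{\gramprskvec}$, producing \eqnref{crb_one}: $\frac{1}{\ssiz}\trace{\qoform{\iFishI[\prskvec]}{\Drv}} \le 1 - \csqfnc{\gramprskvec}$, where \Drv is the Jacobian \eqnref{Drv_form} of that mean map. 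Residual Independence is exactly what licenses this application: it ensures the law of the residual does not depend on the direction of \prskvec, so the \txtCR apparatus applies with \Drv a genuine function of the scalar \gramprskvec alone.

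Under the Normal Returns assumption the bound becomes fully explicit. Parameterizing the likelihood by the natural parameter $\prskvec = \ichol{\pvsig}\pvmu$ renders the log-likelihood linear in \prskvec, so the Fisher information is the constant $\eye[\nlatf]$ (\eqnref{FisherI}) and the \txtCR bound collapses to \eqnref{crb_two}, $\frac{1}{\ssiz}\trace{\ogram{\Drv}} \le 1 - \csqfnc{\gramprskvec}$. I would then evaluate $\trace{\ogram{\Drv}}$ from \eqnref{Drv_form}, using orthogonality of its two summands and $\trace{\ogramprskvec} = \gramprskvec$, to reach \eqnref{crb_three}. Dropping the nonnegative term $\wrapBracks{\cfncp{\gramprskvec}}^2\gramprskvec$ gives the derivative-free inequality \eqnref{crb_four}, which I solve for $\csqfnc{\gramprskvec}$ to obtain \eqnref{crb_five}. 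Taking square roots (valid because a sane estimator has $\cfnc{\psnrsqopt}\ge 0$) and multiplying by $\psnropt = \sqrt{\psnrsqopt}$ yields $\E{\pql{\sportwfnc{\mreti}}} \le \sqrt{\psnrsqopt}\sqrt{\frac{\ssiz\psnrsqopt}{\nlatf - 1 + \ssiz\psnrsqopt}} = \frac{\sqrt{\ssiz}\,\psnrsqopt}{\sqrt{\nlatf - 1 + \ssiz\psnrsqopt}}$, which is the claim.

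The conceptually delicate step is the invocation of the \txtCR bound itself. Because the estimator may be biased and is vector-valued, I must use the form of the bound in which the relevant object is the Jacobian \Drv of the directionally-independent mean map rather than the identity, and I must check that passing from the matrix inequality to its trace preserves the direction of the inequality. Once the Fisher information is seen to equal $\eye[\nlatf]$, everything downstream is routine linear algebra; the only remaining care is the sign condition $\cfnc{\psnrsqopt}\ge 0$ used when extracting the square root.
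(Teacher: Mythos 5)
Your proposal reproduces the paper's own proof essentially step for step: the same reduction via Directional Independence to bounding \cfnc{\psnrsqopt}, the same trace-variance identity with the bias term discarded, the same \txtCR bound with Jacobian \Drv and Fisher information $\eye[\nlatf]$ under the natural parameterization $\prskvec = \ichol{\pvsig}\pvmu$, and the same discarding of the derivative term in \eqnref{crb_three} before solving for \csqfnc{\gramprskvec}. It is correct; your one addition, the sign condition $\cfnc{\psnrsqopt}\ge 0$ when extracting the square root, is harmless but unnecessary, since $\cfnc{\psnrsqopt} \le \abs{\cfnc{\psnrsqopt}}$ means the square-root bound from \eqnref{crb_five} dominates regardless of sign.
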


\theoremref{qual_bound} balances the ``degrees of freedom'' of
the estimator, $\nlatf-1$, with one lost because only direction matters,
and the ``observable effect size'', $\ssiz\psnrsqopt$. The effect size
is a unitless quantity. If \psnropt is measured in trading days, then \ssiz should
be the number of trading days; if \psnropt is measured in 
`annualized' terms, then \ssiz should be the number of years.

This bound is fairly harsh. Consider a typical actively managed portfolio.
Generously, we can estimate $\psnropt=1\yrtomhalf$ over
$\nlatf=10$
assets, using $\ssiz=5\yrto{}$ of historical data. Then the expected
value of \pql{\sportwfnc{\mreti}} is bounded by 
$0.6\yrtomhalf$; the event of having a year-over-year loss is
then a ``$0.6$-sigma'' event. 

\theoremref{qual_bound} suggests that for comparing investments,
the magnitude of the \emph{squared} \txtSR is a limiting factor,
rather than the \txtSR itself (assuming it is positive). That
is, under the bound of the theorem, $\psnropt=2\yrtomhalf$
is \emph{four} times as `good' as $\psnropt=1\yrtomhalf$, in the
sense that such an effect size can `balance' four times as many
degrees of freedom.

\section{Approximate distribution of the \txtQual of the \txtMP}
\label{sec:apx_distribution}

Here we establish an approximate distribution of the 
quantity $\fracc{\pql{\sportw}}{\psnropt} = \cos\spang$
for the sample \txtMP, $\sportwopt \defeq \minvAB{\svsig}{\svmu},$
with $\svsig, \svmu$ the usual sample estimates of \pvsig and
\pvmu. The approximation is constructed by assuming that
misestimation of \pvsig contributes no error to the portfolio.

Assuming that $\svsig = \pvsig$, then
\begin{equation}
\trchol{\pvsig}\sportwopt = \ichol{\pvsig}\svmu = 
\ichol{\pvsig}\pvmu + \oneby{\sqrt{\ssiz}}\zvc,
\end{equation}
where $\zvc \sim \normlaw{\vzero, \eye}$. 
Then, with $\fracc{\pql{\sportwfnc{\mreti}}}{\psnropt} =
\fcos{\spangfnc{\mreti}}$, we should have
\begin{equation}
\fcot{\spangfnc{\mreti}} =  \frac{\norm{\ichol{\pvsig}\pvmu} + \oneby{\sqrt{\ssiz}}z_1}{%
\sqrt{\oneby{\ssiz}\sum_{2 \le i \le \nlatf} z_i^2}},
\end{equation}
where the $z_i$ are independent standard normal random variables.
This can be expressed as
\begin{equation}
\label{apx:qual_dist}
\ftan{\farcsin{\frac{\pql{\sportwfnc{\mreti}}}{\psnropt}}}
\sim \oneby{\sqrt{\nlatf - 1}}\nctlaw{\sqrt{\ssiz}\psnropt,\nlatf-1},
\end{equation}
where \nctlaw{\nctp,\nu} is a non-central \tstat-distribution with 
non-centrality parameter $\nctp$ and $\nu$ degrees of freedom.

\apxref{qual_dist} implies the following approximation:
\begin{equation}
\label{apx:qual_beta}
\pqlsq{\sportwfnc{\mreti}} \sim 
\psnrsqopt \nctbetalaw{\ssiz\psnrsqopt}{\half}{\half[\nlatf-1]},
\end{equation}
where \nctbetalaw{\nctp}{p}{q} is a non-central Beta distribution 
with non-centrality \nctp, and `shape' parameters $p$ and $q$.
\cite{walck:1996}
However, by describing the distribution of the 
\emph{square} of \pql{\sportwfnc{\mreti}}, we cannot easily model the 
(sometimes significant) probability that it 
is negative.
This form, does, however, give bounds on the variance
of \pql{\sportwfnc{\mreti}} under the 
approximation of \apxref{qual_dist}, since
the moments of the non-central Beta are known.
\cite[sec 30.3]{walck:1996} Under \apxref{qual_dist},
we have
\begin{equation}
\label{eqn:hypergeo_extwo}
\E{\pqlsq{\sportwfnc{\mreti}}} = 
\psnrsqopt \exp{-\half[\ssiz\psnrsqopt]}
\GAMhalfrat{3}{1}
\GAMhalfrat{\nlatf}{\nlatf+2}
\HyperF{2}{2}{\half[\nlatf], \half[3] ; \half, \half[2 + \nlatf];
\half[\ssiz\psnrsqopt]},
\end{equation}
where \HyperF{2}{2}{\cdot,\cdot;\cdot,\cdot;\cdot} is the Generalized
Hypergeometric function. \cite[sec 16.2]{NIST_handbook} This
is a rough upper bound on the variance of \pqlsq{\sportwfnc{\mreti}};
a lower bound can be had using the upper bound on the mean
from \theoremref{qual_bound}.

Because the median value of the non-central 
\tstat{}-distribution is approximately equal to the 
non-centrality parameter, \cite{Johnson:1940,kramer_paik_1979}
the median value of \pql{\sportwfnc{\mreti}} for
the sample \txtMP, via \apxref{qual_dist}, is approximately
\begin{equation}
\label{apx:hcut50_apx}
m \approx \psnropt \fsin{\farctan{\frac{\sqrt{\ssiz}\psnropt}{\sqrt{\nlatf-1}}}}
= \frac{\sqrt{\ssiz}\psnrsqopt}{\sqrt{\nlatf - 1 + \ssiz \psnrsqopt}},
\end{equation}
which is exactly the upper bound of \theoremref{qual_bound}!


\subsection{Monte Carlo simulations}

The accuracy of \apxref{qual_dist} is checked by 
Monte Carlo simulations: $\ensuremath{10^{7}}$ simulations were performed
of construction of the \txtMP 
using $\ssiz=1012$ (4 years of daily observations), 
$\nlatf=6$ and $\psnropt=1.25\yrtomhalf$;
the returns are normally distributed.
Since the population \txtMP is known, the portfolio \txtQual 
can be computed exactly.
The Q-Q plot in \figref{haircutting} confirms that 
\apxref{qual_dist} is very good for this choice of $\ssiz, \nlatf, \psnropt$.

Rather than rely on `proof by graph', the \txtKS test was
computed for the values of \txtQual generated under Gaussian returns.
\cite{Marsaglia:Tsang:Wang:2003:JSSOBK:v08i18}
The statistic, the maximal difference between empirical CDF and
theoretical CDF under the approximation, was computed to be 0.004
over the $\ensuremath{10^{7}}$ simulations.
While this seems small, the computed p-value under the null underflows to 0
because the sample size is so large.

\begin{knitrout}\small
\definecolor{shadecolor}{rgb}{0.969, 0.969, 0.969}\color{fgcolor}\begin{figure}[]

\includegraphics[width=\maxwidth]{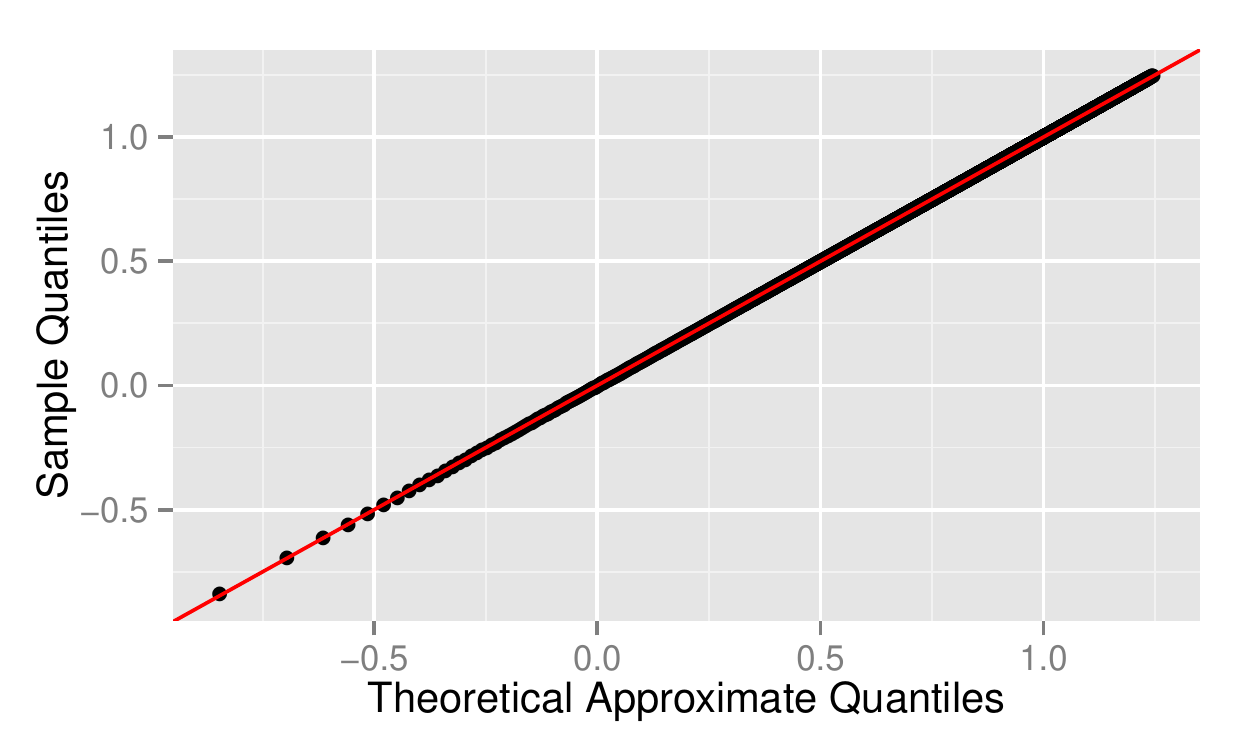} \caption[Q-Q plot of $10^{7}$ simulated \txtQual values versus \apxref{qual_dist} is shown]{Q-Q plot of $10^{7}$ simulated \txtQual values versus \apxref{qual_dist} is shown. Units are `annual', \ie \yrtomhalf. Since the number of samples is very large, only a subset of $10^{4}$ points, uniformly selected by sample quantile, are plotted.\label{fig:haircutting}}
\end{figure}

\end{knitrout}



The experiment is then repeated using returns drawn 
from a uniform distribution, 
a \tstat{}-distribution with $4$ degrees of freedom, 
from a Tukey $h$-distribution with parameter $h=0.15$, 
and from a Lambert W $\times$ Gaussian
with parameter $\gamma=\ensuremath{-0.2}$.  \cite{2009arXiv0912.4554G,2010arXiv1010.2265G}
Returns are generated by first generating \iid \nlatf-variate draws
from a zero mean, identity covariance distribution whose marginals
follow the so-named laws, then scaling and shifting to have the
appropriate \psnropt. For each simulation, the \pvsig is a random
draw from a Wishart random variable.

The uniform distribution is not a realistic model of market returns,
but is included to check the approximation on platykurtic returns.
The \tstat{} and more exotic distributions are more realistic
models of market returns, and are leptokurtotic. The 
Lambert W has non-zero skew. 
Again, $\ensuremath{10^{7}}$ simulations are performed under each of these 
distributions with the same values of $\ssiz, \nlatf, \psnropt$ as
above. 
Some of the empirical quantiles from these simulations are
shown in \tabref{qtiles},
along with the approximate quantiles from \apxref{qual_dist}. 
The \txtKS test statistics for the different
distributions are presented in \tabref{ks_stats}.
For this choice of \ssiz, \nlatf, \psnropt, the approximation is
very good, across the tested returns distributions.

\begin{table}[ht]
\centering
\begin{tabular}{r|ccccc|c}
  \hline
q.tile & normal & unif. & t(4) & Tukey(0.15) & Lam.W(-0.2) & approx. \\ 
  \hline
0.005 & -0.0499 & -0.0493 & -0.0435 & -0.0474 & -0.0514 & -0.0450 \\ 
  0.010 & 0.0947 & 0.0954 & 0.1001 & 0.0973 & 0.0934 & 0.0996 \\ 
  0.025 & 0.2885 & 0.2888 & 0.2923 & 0.2901 & 0.2871 & 0.2928 \\ 
  0.050 & 0.4356 & 0.4356 & 0.4383 & 0.4369 & 0.4341 & 0.4397 \\ 
  0.250 & 0.7858 & 0.7859 & 0.7867 & 0.7864 & 0.7850 & 0.7890 \\ 
  0.500 & 0.9528 & 0.9527 & 0.9531 & 0.9529 & 0.9521 & 0.9550 \\ 
  0.750 & 1.0706 & 1.0706 & 1.0708 & 1.0706 & 1.0702 & 1.0721 \\ 
  0.900 & 1.1432 & 1.1431 & 1.1433 & 1.1433 & 1.1431 & 1.1442 \\ 
   \hline
\end{tabular}
\caption{Empirical quantiles of portfolio \txtQual from $10^{7}$ simulations of 1012 days of 6 assets, with maximal \txtSR of $1.25\yrtomhalf$ are given, along with the approximate quantiles from \apxref{qual_dist}. Units of \txtQual are `annual', \ie \yrtomhalf.} 
\label{tab:qtiles}
\end{table}

\begin{table}[ht]
\centering
\begin{tabular}{ccccc}
  \hline
normal & unif. & t(4) & Tukey(0.15) & Lam.W(-0.2) \\ 
  \hline
0.0045 & 0.0045 & 0.0039 & 0.0043 & 0.0057 \\ 
   \hline
\end{tabular}
\caption{\txtKS statistic comparing the empirical CDF to that of \apxref{qual_dist} over $10^{7}$ simulations of 1012 days of 6 assets, with maximal \txtSR of $1.25\yrtomhalf$ are given for the different returns distributions.} 
\label{tab:ks_stats}
\end{table}


\begin{table}[ht]
\centering
\begin{tabular}{ccccc|c}
  \hline
normal & unif. & t(4) & Tukey(0.15) & Lam.W(-0.2) & bound \\ 
  \hline
0.9 & 0.898 & 0.899 & 0.899 & 0.898 & 0.932 \\ 
   \hline
\end{tabular}
\caption{Empirical mean portfolio \txtQual from $10^{7}$ simulations of 1012 days of 6 assets, with maximal \txtSR of $1.25\yrtomhalf$ are given, along with the upper bound from \theoremref{qual_bound}. Units of \txtQual are `annual', \ie \yrtomhalf.} 
\label{tab:means}
\end{table}
\begin{table}[ht]
\centering
\begin{tabular}{ccccc|c}
  \hline
normal & unif. & t(4) & Tukey(0.15) & Lam.W(-0.2) & approx. \\ 
  \hline
0.9 & 0.864 & 0.865 & 0.864 & 0.863 & 0.868 \\ 
   \hline
\end{tabular}
\caption{Empirical mean of \emph{squared} portfolio \txtQual from $10^{7}$ simulations of 1012 days of 6 assets, with maximal \txtSR of $1.25\yrtomhalf$ are given, along with the approximate value from \eqnref{hypergeo_extwo}. Units of squared \txtQual are `annual', \ie \yrto{-1}.} 
\label{tab:meanssq}
\end{table}

In \tabref{means}, the empirical mean value of \pql{\sportwopt}, over
the $\ensuremath{10^{7}}$ simulations, is presented for the five returns
distributions, along with the upper bound given by
\theoremref{qual_bound}.  It seems that there is a small gap between
the empirical mean for the case of Gaussian returns, and the
theoretical upper bound, a gap on the order of 
$4\%$.
Perhaps this gap is caused by discarding
the derivative term from \eqnref{crb_three}, or because 
the sample Markowitz portfolio is not efficient for finite samples.

In \tabref{meanssq}, the empirical mean value of \pqlsq{\sportwopt}, over
the $\ensuremath{10^{7}}$ simulations, is presented for the five returns
distributions, along with the theoretical value from 
\eqnref{hypergeo_extwo}, which is valid only under 
\apxref{qual_dist}. The approximate value is decent, meaning
an estimate of the variance of \pql{\sportwopt} could be had
by combining \eqnref{hypergeo_extwo} and the upper bound of
\theoremref{qual_bound}.




Of course, these simulations are conducted using only a single
choice of the parameters \ssiz, \nlatf and \psnropt. To check the
robustness of this approximation to these parameters, 
$\ensuremath{10^{5}}$ Monte Carlo simulations were conducted for 
each combination of $\ssiz=0.5, 1, 2, 4, 8$ years of daily observations,
$\nlatf=2, 4, 8, 16, 32$, and $\psnropt=0.35, 0.5, 0.71, 1, 1.41\yrtomhalf$,
all under Gaussian returns. The \txtKS test statistic
is then computed on the empirically observed quantiles of
portfolio \txtQual, under the distribution of
\apxref{qual_dist}.  

\begin{knitrout}\small
\definecolor{shadecolor}{rgb}{0.969, 0.969, 0.969}\color{fgcolor}\begin{figure}[]

\includegraphics[width=\maxwidth]{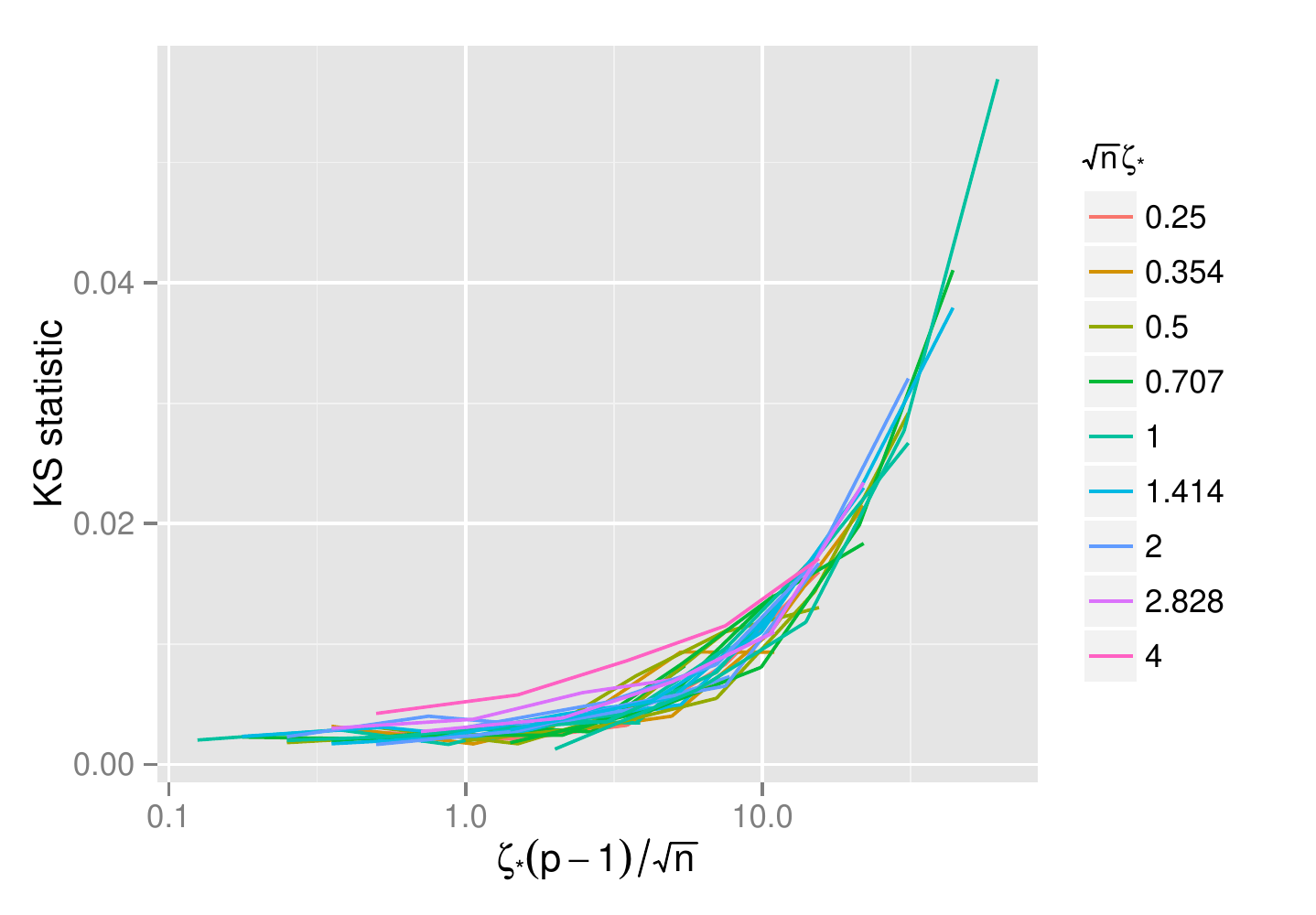} \caption[The \txtKS statistic for \apxref{qual_dist} over $10^{5}$ simulations of Gaussian returns is plotted versus $\psnropt \wrapParens{\nlatf-1}/\sqrt{\ssiz}$, with \psnropt in annualized terms, and \ssiz measured in years]{The \txtKS statistic for \apxref{qual_dist} over $10^{5}$ simulations of Gaussian returns is plotted versus $\psnropt \wrapParens{\nlatf-1}/\sqrt{\ssiz}$, with \psnropt in annualized terms, and \ssiz measured in years.  There is one line for each combination of $\ssiz$ and $\psnropt$. The line color corresponds to the `effect size', $\sqrt{\ssiz}\psnropt$, which is unitless.\label{fig:ksplots}}
\end{figure}

\end{knitrout}

Plots of the \txtKS statistic are given in \figref{ksplots}, and
\figref{ksheat}, which suggest that the quality of
\apxref{qual_dist} is a function of the
quantity $\psnropt \wrapParens{\nlatf-1}/\sqrt{\ssiz}$. 
As a rough guide, when
$\psnropt \wrapParens{\nlatf-1}/\sqrt{\ssiz} \le 5 \yrto{-1}$,
for daily observations, \apxref{qual_dist} is an acceptable 
approximation to the distribution of \txtQual of the sample \txtMP.


\begin{knitrout}\small
\definecolor{shadecolor}{rgb}{0.969, 0.969, 0.969}\color{fgcolor}\begin{figure}[]

\includegraphics[width=\maxwidth]{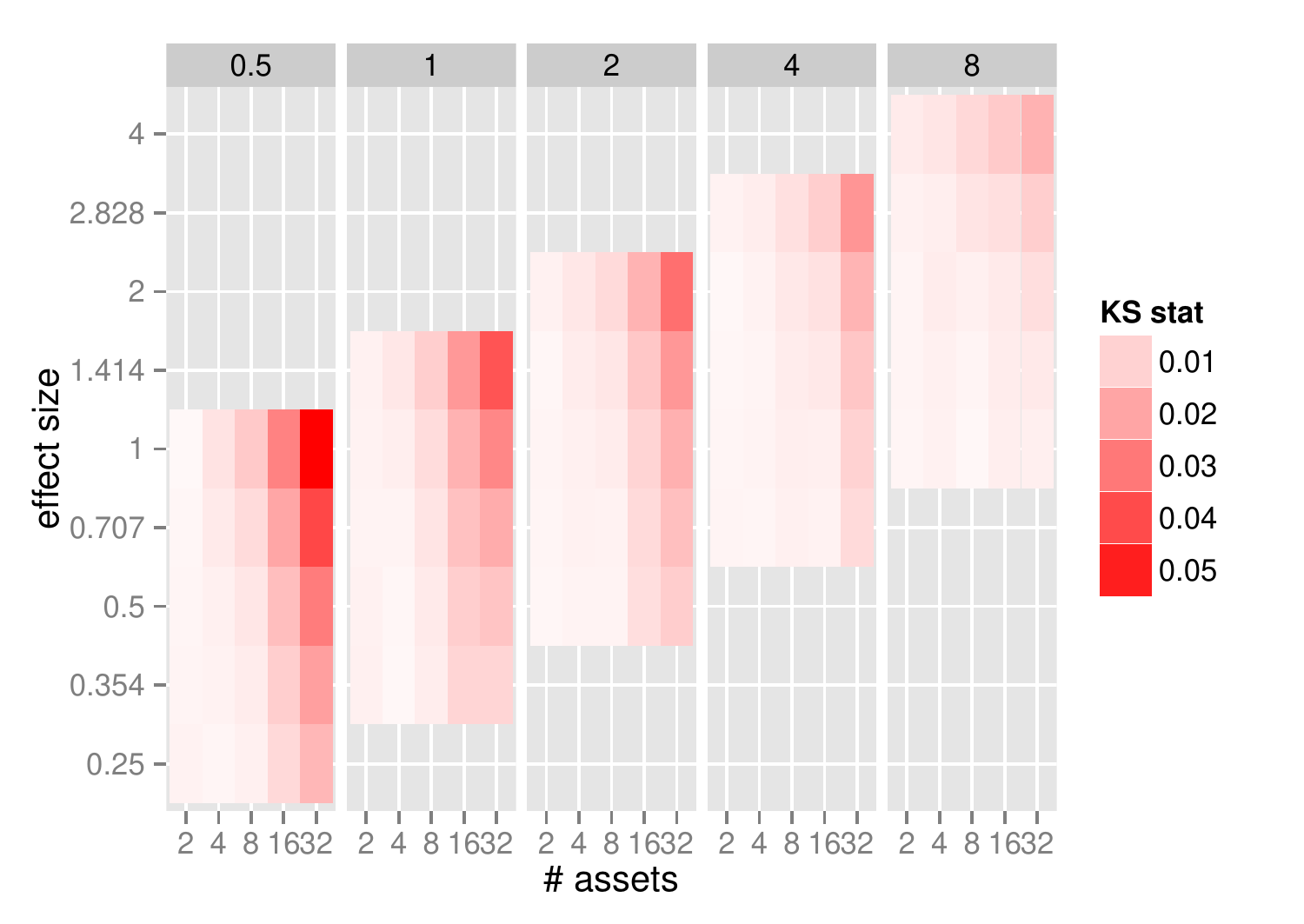} \caption[The \txtKS statistic for \apxref{qual_dist} over $10^{5}$ simulations of Gaussian returns is indicated, by color, versus \nlatf, and the `total effect size,' $\sqrt{\ssiz}\psnropt$, which is a unitless quantity]{The \txtKS statistic for \apxref{qual_dist} over $10^{5}$ simulations of Gaussian returns is indicated, by color, versus \nlatf, and the `total effect size,' $\sqrt{\ssiz}\psnropt$, which is a unitless quantity. Different facets are for different values of \ssiz (in years).\label{fig:ksheat}}
\end{figure}

\end{knitrout}



\section{Diversification}
\label{sec:diversification}

\theoremref{qual_bound} has implications for the diversification
benefit. 
Consider the case of $\nlatf=6, \ssiz=1012,
\psnropt = 1.25\yrtomhalf$ versus some
superset of this asset universe with 
$\nlatf=24, \ssiz=1012,
\psnropt = 1.6\yrtomhalf$. 
Since the optimum cannot
decrease over a larger feasible space, we observe that
the superset has a higher population \txtSNR, \psnropt, 
One should not, of course, increase
the investment universe without \emph{some} concomitant 
increase in \psnropt.
However, in this case the bound on expected \txtQual from
\theoremref{qual_bound} for the smaller asset universe
is $0.93\yrtomhalf$,
while for the superset it is
$0.89\yrtomhalf$. Diversification
has possibly caused a \emph{decrease} in expected \txtQual, even 
though the opportunity exists to increase \txtQual by a fair
amount.

By the `Fundamental Law of Asset Management,' one vaguely
expects \psnropt to increase as $\sqrt{\nlatf}$. \cite{grinold1999active}
If however, \psnropt scales at a rate slower than $\nlatf^{1/4}$,
then the derivative of the bound in \theoremref{qual_bound}
will be negative for sufficiently large 
$\nlatf$: adding assets to the universe causes a decrease in 
expected \txtQual.  To see why, note that
\fracc{\sqrt{\ssiz}\psnrsqopt}{\sqrt{\nlatf - 1 + \ssiz\psnrsqopt}}
has $\psnrsqopt$ in the numerator, and $\sqrt{\nlatf}$ in the denominator;
if \psnropt grows slower than $\nlatf^{1/4}$ the denominator will
outpace the numerator.

More formally, let \qbnd be the bound on \txtQual from 
\theoremref{qual_bound}:
\begin{equation*}
\qbnd\defeq\frac{\sqrt{\ssiz}\psnrsqopt}{\sqrt{\nlatf - 1 + \ssiz\psnrsqopt}}.
\end{equation*}
By taking the derivative of $\log\qbnd$ with respect to \nlatf, a little
calculus reveals that 
\begin{align}
\nonumber
\dbyd{\log\qbnd}{\nlatf} \ge 0 
&\Leftrightarrow
\frac{\psnropt}{2\ssiz\psnrsqopt + 4\wrapParens{\nlatf - 1}} \le
\dbyd{\psnropt}{\nlatf},&\\
&\Leftrightarrow
\frac{1}{2\ssiz\psnrsqopt + 4\wrapParens{\nlatf - 1}} \le
\dbyd{\log\psnropt}{\nlatf},&
\label{eqn:sufficient_growth}
\end{align}
The last inequality is implied by the inequality
$\oneby{4\wrapParens{\nlatf-1}} \le \dbyd{\log\psnropt}{\nlatf}$, 
with equality holding for $\psnropt = c \wrapParens{\nlatf-1}^{1/4}$.

\begin{knitrout}\small
\definecolor{shadecolor}{rgb}{0.969, 0.969, 0.969}\color{fgcolor}\begin{figure}[]

\includegraphics[width=\maxwidth]{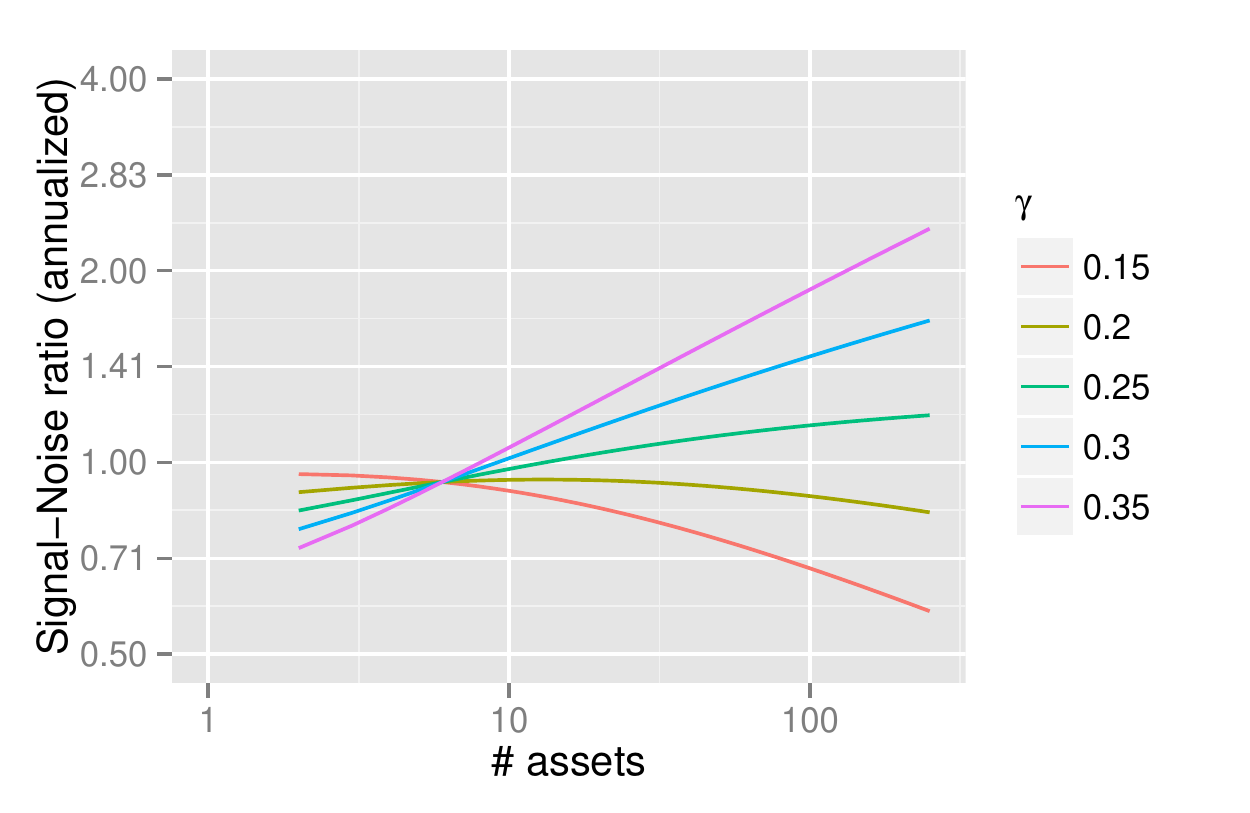} \caption[The upper bound of \theoremref{qual_bound} is plotted versus \nlatf for different scaling laws for \psnropt]{The upper bound of \theoremref{qual_bound} is plotted versus \nlatf for different scaling laws for \psnropt.  These scaling laws correspond to $\psnropt = \psnr[0]\nlatf^{\gamma}$, with $\gamma$ taking values between 0.15 and
0.35.  The constant terms, \psnr[0], are adjusted so that $\psnropt =1.25\yrtomhalf$ for $\nlatf = 6$ for all the lines. The bound uses $\ssiz=1012$, corresponding to 4 years of daily observations.\label{fig:grow_bound}}
\end{figure}

\end{knitrout}

The decreasing upper bound with respect to growing universe size
is illustrated in \figref{grow_bound}. Under the assumption
$\psnropt = \psnr[0]\nlatf^{\gamma}$, the upper bound
of \theoremref{qual_bound} is plotted versus \nlatf for 
different values of $\gamma$. 
The value of
$\psnr[0]$ is set so that 
$\psnropt=1.25\yrtomhalf$ when 
$\nlatf=6$.
For $\gamma < \oneby{4}$, one sees a local maximum in 
the upper bound as \nlatf increases, a behavior not seen for
$\gamma > \oneby{4}$, where the bound on \txtQual
grows with \nlatf.


\begin{knitrout}\small
\definecolor{shadecolor}{rgb}{0.969, 0.969, 0.969}\color{fgcolor}\begin{figure}[]

\includegraphics[width=\maxwidth]{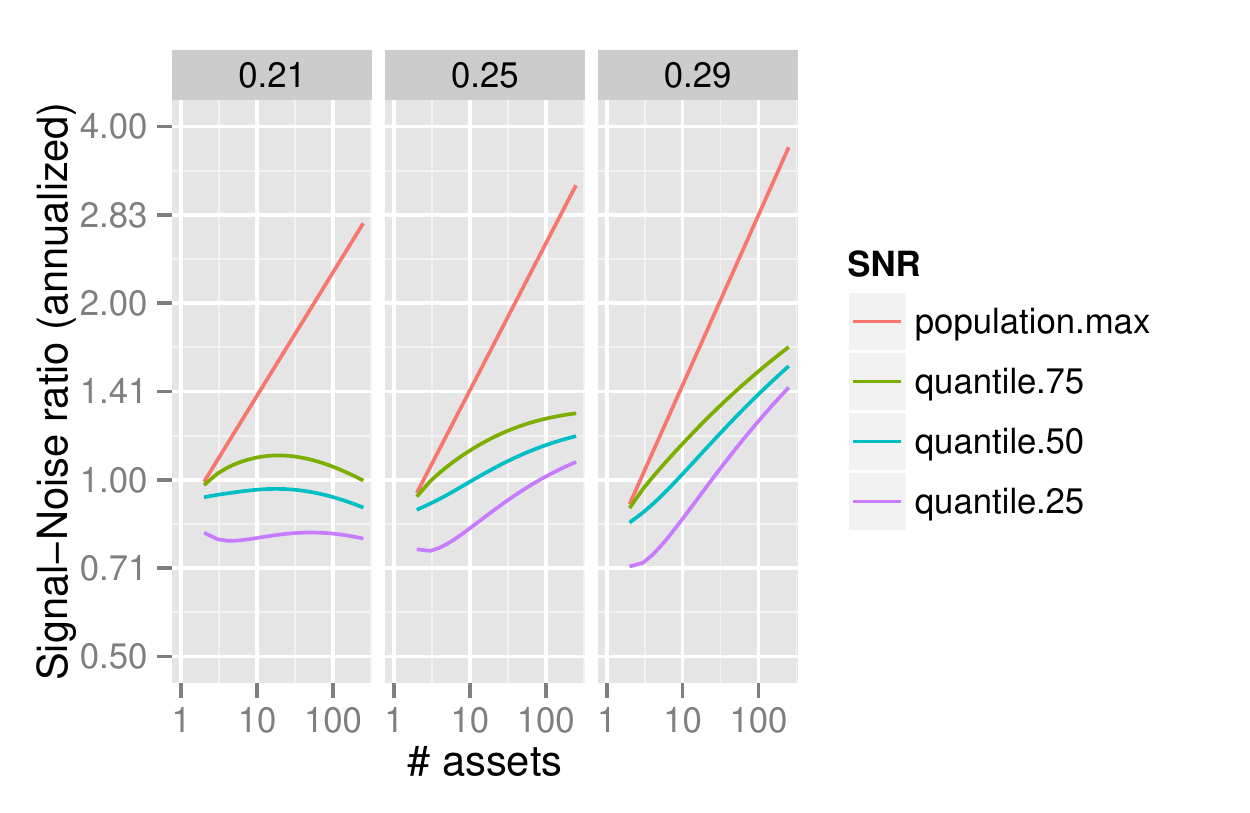} \caption[Some quantiles of the \txtQual of the \txtMP, under \apxref{qual_dist}, are plotted versus \nlatf for different scaling laws for \psnropt]{Some quantiles of the \txtQual of the \txtMP, under \apxref{qual_dist}, are plotted versus \nlatf for different scaling laws for \psnropt. The 3 panels represent different values of $\gamma$, \viz 0.21, 0.25, and 0.29. The bound uses $\ssiz=1012$, corresponding to 4 years of daily observations.\label{fig:grow_sqrt}}
\end{figure}

\end{knitrout}

This relationship between \txtQual and \nlatf for different
values of $\gamma$ appears not just in the upper bound of
\theoremref{qual_bound}, but apparently also for most quantiles
of the distribution given by \apxref{qual_dist}, as
illustrated in \figref{grow_sqrt}. Again assuming
$\psnropt = \psnr[0]\nlatf^{\gamma}$, lines of 
$\psnropt$ and the $0.25, 0.50,$ and $0.75$ quantiles of
$\psnropt\pql{\sportwfnc{\mreti}}$, under 
\apxref{qual_dist}, are plotted versus \nlatf.
The panels represent $\gamma$ values of
$0.21, 0.25,$ and 
$0.29$.
Again, the value of
$\psnr[0]$ is set so that 
$\psnropt=1.25\yrtomhalf$ when 
$\nlatf=6$.
For $\gamma < \oneby{4}$, one sees a local maximum in 
\txtQual as \nlatf increases, a behavior not seen for
$\gamma > \oneby{4}$, where quantiles of \txtQual grow with \nlatf. 
For the case of `slow growth' of \psnropt, the diversification 
benefit is not seen by the sample \txtMP, rather its practical
utility \emph{decreases} because the estimation error 
outpaces the growth of \psnropt. 


\subsection{Diversification under CAPM}

It is not clear how \psnropt `should' scale with \nlatf. It is
easy to construct a model under which \psnropt scales as
$\nlatf^{\half}$: assume all assets have independent returns with
the same \txtSNR. It is also easy to accidentally construct 
a model under which \psnropt ultimately scales 
as $\nlatf^{\epsilon}$ for small $\epsilon$, as done here.
Suppose the \kth{i} asset has expected return
$\alpha_i$, exposure $\beta_i$ to `the market', and volatility
$\sigma$. Assume the market return is zero mean with volatility
$\sigma_m$. Then the squared \txtSNR is
\begin{align}
\nonumber
\psnrsqopt 
&= 
\frac{\gram{\vect{\alpha}} +
\wrapParens{\frac{\sigma_m}{\sigma}}^2 \wrapBracks{%
\gram{\vect{\beta}}\gram{\vect{\alpha}} -
\wrapParens{\trAB{\vect{\alpha}}{\vect{\beta}}}^2}}{\sigma^2 +
\sigma_m^2\gram{\vect{\beta}}},&\\
&= 
\frac{\gram{\vect{\alpha}}}{\sigma^2}
\frac{\sigma^2 + \sigma_m^2 \gram{\vect{\beta}} \fsin[2]{\psi}}{%
\sigma^2 + \sigma_m^2\gram{\vect{\beta}}},&
\label{eqn:capm_zetas}
\end{align}
where $\psi$ is the angle between the vectors \vect{\alpha}
and \vect{\beta}.
Depending on how the sine of $\psi$ grows with universe size, 
one observes different scaling of \psnropt with respect to \nlatf. When
the assets all have the same alpha and beta, \ie
$\vect{\alpha} = \alpha\vone$ and $\vect{\beta} = \beta\vone$,
the sine is identically zero, and 
$\psnropt = \sqrt{\fraccp{\nlatf\alpha^2}{\sigma^2 + \nlatf\sigma_m^2\beta^2}}
< \alpha\beta^{-1}\sigma_m^{-1}$. Thus 
\psnropt asymptotically scales slower than $\nlatf^{\epsilon}$
for all $\epsilon > 0$.

On the other hand, when the sine is
one, \ie when \vect{\alpha} is orthogonal to \vect{\beta},
$\psnropt = \sqrt{\gram{\vect{\alpha}}} \sigma^{-1}$, which
grows however the assets are ordered, presumably on the
order of $\nlatf^{\half}$. Thus under a CAPM model, the
growth of \psnropt depends on the `alignment' of the
vectors \vect{\alpha} and \vect{\beta}.


\section{Generalizations}

\label{sec:generalizations}

\theoremref{qual_bound} is somewhat lacking because it ignores conditioning
information which may affect the distribution of future returns, and which
may inform the portfolio manager. Few active managers, it is presumed,
are holding the unconditional \txtMP based on in-sample data. What is sought
is a more general theorem that allows more elaborate models of returns,
and more elaborate, parametrized, trading schemes, with \psnropt redefined 
as the maximal portfolio \txtQual over the trading schemes, and \nlatf
redefined as the `degrees of freedom', perhaps the rank of some derivative
at the optimal parameter, say. Towards that goal, a few generalizations can
easily be made.

\subsection{Conditional portfolio \txtQual}
\label{subsec:cond_portfolio_qual}

The model of stationary mean returns is generalized by one where
the expected return of the assets is linear in some state
variables, or `features', \vfact[i], observed prior to the investment
decision.  \cite{pav2013markowitz,connor1997,herold2004TAA} That
is, one observes the \nfac-vector $\vfact[i]$ at some time prior to 
when the investment decision is required to capture \vreti[i+1]. 
The general model is now
\begin{align}
\label{eqn:cond_model_IV}
\Econd{\vreti[i+1]}{\vfact[i]} &=
\pRegco \vfact[i], &
\Varcond{\vreti[i+1]}{\vfact[i]} &=
\pvsig,
\end{align}
where \pRegco is some \bby{\nlatf}{\nfac} matrix. 

Here we bound the \txtQual of portfolios which are
linear in the features \vfact[i]. That is, the portfolio
manager allocates their assets proportional to
$\sportW\vfact[i]$ for some matrix $\sportW$.

Using the law of iterated expectations, the unconditional
expected value of the returns of the portfolio is
\begin{equation*}
\E{\Econd{\trAB{\wrapParens{\sportW\vfact[i]}}{\vreti[i+1]}}{\vfact[i]}}
= \trace{\tr{\sportW}\pRegco\E{\ogram{\vfact[i]}}}
= \trace{\tr{\sportW}\pRegco\pfacsig},
\end{equation*}
by definition of \pfacsig as the second moment of \vfact[i].

Unfortunately the unconditional variance will, in general,
involve a term quadratic in the expectation. However, it
can easily be shown that the unconditional \emph{expected} 
variance of the portfolio's returns is
\begin{equation*}
\E{\qform{\pvsig}{\wrapParens{\sportW\vfact[i]}}} =
\trace{\qform{\pvsig}{\sportW}\pfacsig}.
\end{equation*}
We can then redefine\footnote{If an analysis of the conditional expected
return divided by risk is required, it is possible one could define
\pQl{\cdot} as the expected return divided by square root of the unconditional
second moment. The \txtQual would then be $\ftan{\farcsin{\pQl{\cdot}}}$.
One could possibly find a \txtCR bound on the expected value of this \pQl{\cdot}.
This `Pillai-Bartlett' form of \pQl{\cdot} is likely unrequired for low frequency
settings.} the \txtQual of the portfolio as the
unconditional mean divided by the unconditional expected risk:
\begin{equation}
\pQl{\sportW}\defeq\frac{\trace{\tr{\sportW}\pRegco\pfacsig}}{%
\sqrt{\trace{\qform{\pvsig}{\sportW}\pfacsig}}}.
\end{equation}
When \vfact[i] is a deterministic scalar constant, 
this coincides with the `usual' definition of \txtQual as being 
like a \txtSR. However, except possibly for an intercept term, one
expects \vfact[i] to be random, or at least out of the control of
the portfolio manager.

Once again, a risk transform can be injected to express
portfolio optimization as an estimation problem on a sphere:
\begin{equation}
\begin{split}
\pQl{\sportW}
&=\frac{\trace{\trAB{%
\wrapParens{\trchol{\pvsig}\sportW\chol{\pfacsig}}}{%
\wrapParens{\ichol{\pvsig}\pRegco\chol{\pfacsig}}}{%
}}}{%
\sqrt{\trace{
\gram{\wrapParens{\trchol{\pvsig}\sportW\chol{\pfacsig}}}}}},\\
&=
\frac{\trAB{\fvec{\trchol{\pvsig}\sportW\chol{\pfacsig}}}{%
\fvec{\ichol{\pvsig}\pRegco\chol{\pfacsig}}}}{%
\sqrt{\gram{\fvec{\trchol{\pvsig}\sportW\chol{\pfacsig}}}}}.
\end{split}
\end{equation}
This function is maximized by taking
\begin{equation}
\sportW = \pportWopt \defeq \minv{\pvsig}{\pRegco},
\end{equation}
which has \txtQual
\begin{equation}
\psnropt\defeq\pQl{\pportWopt} =
\sqrt{\trace{\qiform{\pvsig}{\pRegco}\pfacsig}}.
\end{equation}
The square of this quantity, \psnrsqopt, 
is the `population analogue' of the Hotelling-Lawley trace. \cite{Rencher2002,Muller1984143}

Again we can write
\begin{equation*}
\frac{\pQl{\sportW}}{\psnropt} = 
\trAB{\fnorm{\fvec{\trchol{\pvsig}\sportW\chol{\pfacsig}}}}{%
\fnorm{\fvec{\ichol{\pvsig}\pRegco\chol{\pfacsig}}}}.
\end{equation*}
Thus finding a `good' \sportW becomes an estimation problem on the
sphere \sphere{\nfac\nlatf - 1}. An analogue
to \theoremref{qual_bound} can be proved with $\nfac\nlatf$ replacing
$\nlatf$, by assuming a particular form to the likelihood. We must
generalize the assumption of Directional Independence, after which
the theorem proceeds easily.

\begin{assumption}[Conditional Directional Independence]
Assume that
\begin{equation}
\label{eqn:sane_estimator_cond}
\E{\fnorm{\fvec{\trchol{\pvsig}\sportWfnc{\mreti,\mfact}}}} = 
\cfnc{\psnrsqopt} \fnorm{\fvec{\trchol{\pvsig}\pportWopt\chol{\pfacsig}}}
+ \Bterm,
\end{equation}
where \Bterm is the bias term, orthogonal to 
\fnorm{\fvec{\trchol{\pvsig}\pportWopt\chol{\pfacsig}}}.
\end{assumption}

\begin{theorem}
\label{theorem:qual_bound_two}
Let one element of \vfact[i] be a deterministic $1$. Suppose the
vector of the remaining $\nfac-1$ elements of \vfact[i] stacked 
on top of \vreti[i+1] are multivariate Gaussian.  Let \mreti, 
\mfact be \bby{\ssiz}{\nlatf} and \bby{\ssiz}{\nfac} matrices
of \iid observations of the features and returns.
Let \sportWfnc{\mreti,\mfact} be an estimator 
satisfying the assumptions of
Conditional Directional Independence and Residual Independence. Then
\begin{equation}
\E{\pQl{\sportWfnc{\mreti,\mfact}}} 
\le \frac{\sqrt{\ssiz}\psnrsqopt}{\sqrt{\nfac\nlatf - 1 + \ssiz\psnrsqopt}}.
\end{equation}
\end{theorem}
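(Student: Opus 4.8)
\emph{Proof proposal.} The plan is to reduce \theoremref{qual_bound_two} to the scalar Cram\'er--Rao argument already carried out for \theoremref{qual_bound}, the one genuinely new ingredient being the Fisher information for the vectorized, risk- and feature-transformed regression coefficients. First I would introduce the natural ``signal'' parameter, the analogue of \prskvec, by setting $\prskvec \defeq \fvec{\ichol{\pvsig}\pRegco\chol{\pfacsig}}$, a vector of dimension $\nfac\nlatf$ satisfying $\gramprskvec = \trace{\qiform{\pvsig}{\pRegco}\pfacsig} = \psnrsqopt$. With this identification the estimation problem lives on the sphere $\sphere{\nfac\nlatf - 1}$, and Conditional Directional Independence, \eqnref{sane_estimator_cond}, plays precisely the role that \eqnref{sane_estimator} played before: by orthogonality of \Bterm and linearity of the expectation one gets $\E{\cos\spangfnc{\mreti,\mfact}} = \cfnc{\psnrsqopt}$, and, exactly as in the derivation of \eqnref{var_bounds}, the trace-of-variance bound $\trace{\VAR{\fnorm{\fvec{\trchol{\pvsig}\sportWfnc{\mreti,\mfact}\chol{\pfacsig}}}}} \le 1 - \csqfnc{\psnrsqopt}$.

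The heart of the argument is to show that the Fisher information for \prskvec equals $\eye[\nfac\nlatf]$, so that the clean bound \eqnref{crb_two} reappears with $\nlatf$ replaced by $\nfac\nlatf$. Conditioning on the features, the Gaussian conditional log-likelihood of a single observation is, up to constants, $-\half\qiform{\pvsig}{\wrapParens{\vreti[i+1] - \pRegco\vfact[i]}}$; writing $\theta = \fvec{\pRegco}$, the term linear in \pRegco has vanishing Hessian, while the only quadratic piece, $-\half\trace{\tr{\pRegco}\pvsig^{-1}\pRegco\,\ogram{\vfact[i]}}$, has Hessian $-\wrapParens{\ogram{\vfact[i]}}\otimes\pvsig^{-1}$ in $\theta$. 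Taking the negative expectation over the feature law with $\E{\ogram{\vfact[i]}} = \pfacsig$ gives a per-observation information $\pfacsig\otimes\pvsig^{-1}$ for $\fvec{\pRegco}$. Changing variables to \prskvec through the Kronecker Jacobian $\Mtx{J} = \trchol{\pfacsig}\otimes\ichol{\pvsig}$ (so that $\prskvec = \Mtx{J}\fvec{\pRegco}$) transforms this to $\Mtx{J}^{-\trsym}\wrapParens{\pfacsig\otimes\pvsig^{-1}}\Mtx{J}^{-1}$, which collapses to $\eye[\nfac\nlatf]$ upon using $\wrapParens{\Mtx{A}\otimes\Mtx{B}}\wrapParens{\Mtx{C}\otimes\Mtx{D}} = \wrapParens{\Mtx{A}\Mtx{C}}\otimes\wrapParens{\Mtx{B}\Mtx{D}}$ together with $\ichol{\pvsig}\chol{\pvsig} = \eye$ and $\ichol{\pfacsig}\chol{\pfacsig} = \eye$.

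Two points require care, and I expect the second to be the main obstacle. First, the deterministic unit coordinate of \vfact[i] must be folded into \pfacsig so that the latter is the full, invertible second-moment matrix and its Cholesky factor \chol{\pfacsig} exists; this is exactly the role of the constant-$1$ feature. Second, one must justify that estimating the nuisance parameters of the feature distribution costs no information about \pRegco. This is where the stacked-Gaussian hypothesis is genuinely used: the joint likelihood factors as the conditional law of \vreti[i+1] given \vfact[i] (carrying \pRegco and \pvsig) times the marginal law of the random features (carrying a disjoint set of parameters), so the cross Fisher blocks vanish and the information for \prskvec is exactly the conditional information computed above. I would write this factorization out explicitly rather than assert it.

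With $\FishI[\prskvec] = \eye[\nfac\nlatf]$, the remainder is mechanical and mirrors \theoremref{qual_bound}. Since the \Drv of \eqnref{Drv_form} depends on \prskvec only through \gramprskvec and on the ambient dimension, \eqnref{crb_two} gives $\trace{\ogram{\Drv}} = \wrapBracks{\cfncp{\gramprskvec}}^2\gramprskvec + \csqfnc{\gramprskvec}\fracc{\nfac\nlatf - 1}{\gramprskvec}$, the only change from \eqnref{crb_three} being the size of the identity. Discarding the nonnegative derivative term as in \eqnref{crb_four} and solving as in \eqnref{crb_five} yields $\csqfnc{\gramprskvec} \le \fracc{\ssiz\gramprskvec}{\nfac\nlatf - 1 + \ssiz\gramprskvec}$; because $\E{\pQl{\sportWfnc{\mreti,\mfact}}} = \psnropt\cfnc{\psnrsqopt}$, taking square roots produces the stated bound.
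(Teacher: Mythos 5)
Your proposal is correct, and its overall skeleton --- map the problem onto the sphere \sphere{\nfac\nlatf-1} via $\fvec{\ichol{\pvsig}\pRegco\chol{\pfacsig}}$, show the Fisher information is \eye[\nfac\nlatf], then rerun the variance/\txtCR machinery of \secref{portfolio_qual} verbatim --- is exactly the paper's. Where you genuinely diverge is in the one nontrivial step, the computation of \FishI. The paper works with the \emph{joint} likelihood of \vfact[i] stacked on \vreti[i+1], parametrized by the second-moment matrix \pvsm: it writes down the explicit inverse $\minv{\pvsm}$, exhibits a Cholesky-type factorization of it in which \prskMtx appears, and uses the block determinant identity $\det{\pvsm}=\det{\pfacsig}\det{\pvsig}$ to show the log-determinant is free of \prskMtx, so that only the quadratic trace term $\half\tr{\fvec{\prskMtx}}\wrapParens{\wrapBracks{\trichol{\pfacsig}\ogram{\vfact[i]}\ichol{\pfacsig}}\kron\eye}\fvec{\prskMtx}$ survives, whose expected Hessian is \eye[\nfac\nlatf] since $\E{\ogram{\vfact[i]}}=\pfacsig$. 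You instead factor joint $=$ conditional $\times$ marginal, compute the Hessian $\wrapParens{\ogram{\vfact[i]}}\otimes\minv{\pvsig}$ of the conditional regression negative log-likelihood in $\fvec{\pRegco}$, average over the feature law to get $\pfacsig\otimes\minv{\pvsig}$, and then reparametrize through the Kronecker Jacobian $\trchol{\pfacsig}\otimes\ichol{\pvsig}$; the mixed-product rule collapses this to the identity. Both computations produce the same object (indeed the paper's final expectation is precisely your conditional information averaged over \vfact[i]); your route avoids ever needing $\minv{\pvsm}$, its factorization, or the determinant identity, while the paper's route makes structurally explicit that the nuisance-parameter terms (including the log-determinant) separate from \prskMtx within the joint likelihood --- which is the very factorization you correctly flag as needing to be written out, and which you handle at the same level of rigor as the paper, since both treat \pfacsig and \pvsig as fixed by conditioning rather than accounting for the cost of estimating them. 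Your observation that the deterministic $1$ must be folded into \pfacsig to keep it invertible also matches the paper's ``without loss of generality, the first element of \vfact[i] is a deterministic $1$'' setup.
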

\begin{proof}

We can proceed as in \secref{portfolio_qual}. Let \mreti be the 
\bby{\ssiz}{\nlatf} matrix of portfolio returns, and let 
\mfact be the corresponding \bby{\ssiz}{\nfac} matrix of features.
View the portfolio coefficient \sportW as an estimator, a function of
the random data, \ie \sportWfnc{\mreti,\mfact}.
Define
\begin{equation}
\prskMtx\defeq{\ichol{\pvsig}\pRegco\chol{\pfacsig}}.
\end{equation}
Then 
\begin{equation*}
\psnrsqopt = \trace{\gramprskMtx}.
\end{equation*}

We get, analogously to \eqnref{crb_one},
\begin{equation}
\oneby{\ssiz}\trace{\qoform{\iFishI[\fvec{\prskMtx}]}{\Drv}}
\le 1 - \csqfnc{\trace{\gramprskMtx}},
\end{equation}
where
\begin{equation}
\Drv\defeq
{\dbyd{\cfnc{\trace{\gramprskMtx}}\frac{\prskMtx}{\sqrt{\trace{\gramprskMtx}}}}{\fvec{\prskMtx}}}.
\end{equation}

Without loss of generality, we assume it is the first element
of \vfact[i] that is a deterministic $1$. Then, the log likelihood of 
the vector of $\vfact[i]$ stacked on top of $\vreti[i+1]$
is: \cite{pav2013markowitz}
\begin{equation}
\label{eqn:cond_llik_one}
\log \FOOlik{}{\pvsm}{\twobyone{\vfact[i]}{\vreti[i+1]}} = 
  c_{\nfac+\nlatf} 
- \half \logdet{\pvsm} 
- \half \trace{\minv{\pvsm}\ogram{\twobyone{\vfact[i]}{\vreti[i+1]}}},
\end{equation}
where \pvsm is the second moment matrix:
\begin{equation}
\pvsm \defeq \E{\ogram{\twobyone{\vfact[i]}{\vreti[i+1]}}}
= \twobytwo{\pfacsig}{\pfacsig\tr{\pRegco}}{\pRegco\pfacsig}{\pvsig +
\qoform{\pfacsig}{\pRegco}}.
\end{equation}
The inverse of \pvsm has the following, somewhat surprising, form \cite{pav2013markowitz}:
\begin{equation}
\minv{\pvsm} 
= \twobytwo{\minv{\pfacsig} +
\qiform{\pvsig}{\pRegco}}{-\tr{\pRegco}\minv{\pvsig}}{-\minv{\pvsig}\pRegco}{\minv{\pvsig}}.
\end{equation}
A square root of this matrix (a Cholesky factor, up to permutation)
is:
\begin{equation}
\begin{split}
\minv{\pvsm} &=
\ogram{%
\twobytwo{\ichol{\pfacsig}}{-\tr{\pRegco}\ichol{\pvsig}}{\mzero}{\ichol{\pvsig}}
},\\
&= \twobytwo{\ichol{\pfacsig}}{\mzero}{\mzero}{\eye}
\ogram{%
\twobytwo{\eye}{-\prskMtxU{\trsym}}{\mzero}{\ichol{\pvsig}}}
\twobytwo{\trichol{\pfacsig}}{\mzero}{\mzero}{\eye}.
\end{split}
\end{equation}

By the block determinant formula, 
\begin{equation}
\det{\pvsm} 
= \det{\pfacsig}\det{\pvsig + \qoform{\pfacsig}{\pRegco} -
\pRegco\pfacsig\minv{\pfacsig}\pfacsig\tr{\pRegco}}
= \det{\pfacsig}\det{\pvsig}.
\end{equation}
Thus, conditional on \pfacsig and \pvsig, the negative log likelihood
takes the form:
\begin{multline}
\label{eqn:cond_llik_two}
- \log \FOOlik{}{\prskMtx, \pfacsig, \pvsig}{\twobyone{\vfact[i]}{\vreti[i+1]}} = 
- c_{\nfac+\nlatf}
+ \half \logdet{\pfacsig} 
+ \half \logdet{\pvsig}\\
+ \half \trace{
\ogram{%
\twobytwo{\eye}{-\prskMtxU{\trsym}}{\mzero}{\ichol{\pvsig}}}
\ogram{\twobyone{\trichol{\pfacsig}\vfact[i]}{\vreti[i+1]}}}.
\end{multline}
Sweeping the nuisance parameter terms into the constant, as well
as terms in the trace which are not quadratic in \prskMtx, we
have 
\begin{align}
\label{eqn:cond_llik_three}
- \log \FOOlik{}{\prskMtx, \pfacsig, \pvsig}{\twobyone{\vfact[i]}{\vreti[i+1]}}
&= - c'
+ \half \trace{\gramprskMtx 
\ogram{\wrapParens{\trichol{\pfacsig}\vfact[i]}}},&\\
&= - c'
+ \half
\tr{\fvec{\prskMtx}}\fvec{\prskMtx\trichol{\pfacsig}\ogram{\vfact[i]}\ichol{\pfacsig}}.&\\
&= - c'
+ \half
\tr{\fvec{\prskMtx}}\wrapParens{%
\wrapBracks{\trichol{\pfacsig}\ogram{\vfact[i]}\ichol{\pfacsig}}
 \kron \eye}\fvec{\prskMtx}.&
\end{align}
The Fisher Information, then, is
\begin{equation}
\FishI[\fvec{\prskMtx}] =
\E{\wrapParens{%
\wrapBracks{\trichol{\pfacsig}\ogram{\vfact[i]}\ichol{\pfacsig}}
 \kron \eye}} = \eye[\nfac\nlatf].
\end{equation}

The remainder of the proof proceeds exactly as in 
\secref{portfolio_qual}. 
\end{proof}





\subsection{Subspace constraints}

Consider, now, the case of conditional expectation, as presented in
\subsecref{cond_portfolio_qual}, but where the portfolio is 
constrained to be in some lower dimensional subspace. That is, by design, 
\begin{equation}
\zerJc \sportWfnc{\mreti,\mfact} = \vzero,
\end{equation}
where $\zerJc$ is a \bby{\wrapParens{\nlatf - \nlatfzer}}{\nlatf}
matrix of rank $\nlatf - \nlatfzer$, that is chosen indpendently
of the observations of \mreti and \mfact.
Let the rows of \zerJ span the null space of the rows of
\zerJc; that is, $\zerJc \tr{\zerJ} = \mzero$, and $\ogram{\zerJ} = \eye$.

We can simply use the results of \subsecref{cond_portfolio_qual}, but
replacing the assets with the \nlatfzer assets spanned by the rows of
\zerJ. That is, we can replace the \vreti[i+1] with $\zerJ\vreti[i+1]$,
and replace \sportWfnc{\mreti,\mfact} with
$\tr{\zerJ}\minv{\wrapParens{\ogram{\zerJ}}}\sportWfnc{\mreti,\mfact}$
to arrive at the following analogue of \theoremref{qual_bound_two}:

\begin{theorem}
\label{theorem:qual_bound_three}
Let one element of \vfact[i] be a deterministic $1$. Suppose the
vector of the remaining $\nfac-1$ elements of \vfact[i] stacked 
on top of \vreti[i+1] are multivariate Gaussian.  Let \mreti, 
\mfact be \bby{\ssiz}{\nlatf} and \bby{\ssiz}{\nfac} matrices
of \iid observations of the features and returns.
Let \sportWfnc{\mreti,\mfact} be an estimator 
satisfying the assumptions of
directional independence and residual independence, with the constraint
\begin{equation}
\zerJc \sportWfnc{\mreti,\mfact} = \vzero,
\end{equation}
for \bby{\wrapParens{\nlatf - \nlatfzer}}{\nlatf} matrix \zerJc, which
is chosen independently of the observed \mreti and \mfact. 
Let the rows of \zerJ span the null space of the rows of
\zerJc. 

Then
\begin{equation}
\E{\pQl{\sportWfnc{\mreti,\mfact}}} 
\le \frac{\sqrt{\ssiz}\psnrsqoptG{\zerJ}}{\sqrt{\nfac\nlatfzer - 1 +
\ssiz\psnrsqoptG{\zerJ}}},
\end{equation}
where 
\begin{equation*}
\psnrsqoptG{\zerJ}\defeq 
\trace{\qform{\wrapProj{\pvsig}{\zerJ}}{\pRegco}\pfacsig}.
\end{equation*}
\end{theorem}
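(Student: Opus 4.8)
The plan is to deduce \theoremref{qual_bound_three} from \theoremref{qual_bound_two} by passing to the $\nlatfzer$ \emph{synthetic assets} whose returns are $\zerJ\vreti[i+1]$. The constraint $\zerJc\sportWfnc{\mreti,\mfact}=\vzero$ forces every column of the portfolio into the null space of $\zerJc$, which is the row space of $\zerJ$; since $\ogram{\zerJ}=\eye$, any feasible $\sportW$ satisfies $\sportW = \tr{\zerJ}\wrapParens{\zerJ\sportW}$, \ie it is the lift by $\tr{\zerJ}$ of the portfolio $\zerJ\sportW$ on the synthetic assets. First I would record the reduced model: by the law of iterated expectations $\Econd{\zerJ\vreti[i+1]}{\vfact[i]} = \zerJ\pRegco\vfact[i]$ and $\Varcond{\zerJ\vreti[i+1]}{\vfact[i]} = \qform{\pvsig}{\tr{\zerJ}}$, so the synthetic universe is again an instance of \eqnref{cond_model_IV}, now with regression matrix $\zerJ\pRegco$ and the (conditionally homoskedastic) covariance $\qform{\pvsig}{\tr{\zerJ}}$.

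The crux is that \txtQual is invariant under this reduction. Substituting $\sportW = \tr{\zerJ}\wrapParens{\zerJ\sportW}$ into \pQl{\sportW}, the numerator becomes $\trace{\tr{\wrapParens{\zerJ\sportW}}\zerJ\pRegco\pfacsig}$ and the expected risk becomes $\trace{\qform{\qform{\pvsig}{\tr{\zerJ}}}{\wrapParens{\zerJ\sportW}}\pfacsig}$; these are exactly the numerator and expected risk obtained by treating $\zerJ\sportW$ as a feature-linear portfolio on the synthetic assets. Hence $\pQl{\sportW}$ equals the synthetic-universe \txtQual of $\zerJ\sportW$, and in particular the maximal attainable value is
\[
\psnrsqoptG{\zerJ} = \trace{\tr{\pRegco}\tr{\zerJ}\wrapParens{\zerJ\pvsig\tr{\zerJ}}^{-1}\zerJ\pRegco\,\pfacsig} = \trace{\qform{\wrapProj{\pvsig}{\zerJ}}{\pRegco}\pfacsig},
\]
which identifies $\wrapProj{\pvsig}{\zerJ} = \tr{\zerJ}\wrapParens{\zerJ\pvsig\tr{\zerJ}}^{-1}\zerJ$ as the projected precision matrix appearing in the statement.

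It then remains to verify that the synthetic universe satisfies the hypotheses of \theoremref{qual_bound_two}. The deterministic unit feature is untouched; since $\zerJ\vreti[i+1]$ is a fixed linear image of $\vreti[i+1]$, the stack of the remaining $\nfac-1$ features on top of $\zerJ\vreti[i+1]$ is jointly Gaussian whenever the original stack is, and the conditional covariance $\qform{\pvsig}{\tr{\zerJ}}$ is constant in $\vfact[i]$, as required. The constrained estimator induces the reduced estimator $\zerJ\sportWfnc{\mreti,\mfact}$, for which the Conditional Directional Independence and Residual Independence hypotheses are read on the reduced sphere $\sphere{\nfac\nlatfzer-1}$. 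Applying \theoremref{qual_bound_two} to the synthetic universe, with $\nlatfzer$ in place of $\nlatf$ and $\psnrsqoptG{\zerJ}$ in place of $\psnrsqopt$, and using the \txtQual invariance to rewrite the left-hand side back in terms of $\sportWfnc{\mreti,\mfact}$, produces the claimed bound with $\nfac\nlatfzer-1$ replacing $\nfac\nlatf-1$.

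The main obstacle is conceptual rather than computational: one must argue that imposing the subspace constraint and \emph{then} reading the independence assumptions on the reduced sphere is legitimate. A genuinely constrained estimator cannot be directionally equivariant over all of $\sphere{\nfac\nlatf-1}$, because it can never point outside the feasible subspace, so the hypotheses of \theoremref{qual_bound_two} must be understood for the induced estimator $\zerJ\sportWfnc{\mreti,\mfact}$ in the $\nfac\nlatfzer$-dimensional problem. Provided the induced estimator is honestly a function of the synthetic data, the reduced log likelihood has the same block structure as \eqnref{cond_llik_three}, its Fisher information is again $\eye[\nfac\nlatfzer]$, and the \txtCR computation of \secref{portfolio_qual} carries over verbatim; no new estimates are needed.
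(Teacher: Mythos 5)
Your proposal is correct and takes essentially the same approach as the paper, which proves this theorem in one stroke by replacing the assets with the \nlatfzer synthetic assets spanned by the rows of \zerJ (\ie substituting $\zerJ\vreti[i+1]$ for \vreti[i+1], reading the constrained estimator in the reduced coordinates, and invoking \theoremref{qual_bound_two}). Your writeup simply makes explicit what the paper leaves implicit: the invariance $\pQl{\sportW}=\pQl[]{\zerJ\sportW}$ under the lift $\sportW = \tr{\zerJ}\wrapParens{\zerJ\sportW}$, the identification $\wrapProj{\pvsig}{\zerJ} = \tr{\zerJ}\wrapParens{\zerJ\pvsig\tr{\zerJ}}^{-1}\zerJ$, and the (correct) observation that the independence assumptions must be understood for the induced estimator on the reduced sphere $\sphere{\nfac\nlatfzer-1}$.
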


\subsection{Hedging constraints}

Consider, now, the case where one seeks a portfolio whose returns are
independent, in the probabilistic sense, of the returns of some 
traded instruments in the investment universe. 
Independence is a difficult property to
check or enforce; however, independence implies zero covariation, which
can be easily formulated and checked. 

Since the portfolio estimator may not deliver a perfectly
hedged portolio due to misestimation of the covariance matrix, we
will, with perfect knowledge of \pvsig, consider the \txtQual of the 
hedged part of the portfolio. The hedged
part is defined in terms of a risk projection. 
If \sportw[1] is a feasible portfolio based on the sample,
then the hedged version of this portfolio is the solution to the
optimization problem
\begin{equation}
\min_{\sportw : \hejG\pvsig\sportw = \vzero} \VAR{\trAB{\wrapParens{\sportw -
\sportw[1]}}{\vreti[i+1]}},
\end{equation}
where $\hejG$ is a \bby{\nlatfhej}{\nlatf} matrix of 
rank \nlatfhej, the rows of which we wish to `hedge out.'

Using the Lagrange multiplier technique, this can easily be found to
be solved by 
\begin{equation}
\sportw = \sportw[1] - \wrapProj{\pvsig}{\hejG}\pvsig\sportw[1].
\end{equation}
Thus we will consider the \txtQual of the portfolio estimator
\begin{equation*}
\wrapParens{\eye[\nlatf] - \wrapProj{\pvsig}{\hejG}\pvsig}\sportWfnc{\mreti,\mfact}.
\end{equation*}
Note, however, that the row rank of
$\wrapParens{\eye[\nlatf] - \wrapProj{\pvsig}{\hejG}\pvsig}$ 
is $\nlatf - \nlatfhej$.  Thus hedging is an instance of a subspace
constraint and we can apply \theoremref{qual_bound_three} outright.

\begin{theorem}
\label{theorem:qual_bound_four}
Let one element of \vfact[i] be a deterministic $1$. Suppose the
vector of the remaining $\nfac-1$ elements of \vfact[i] stacked 
on top of \vreti[i+1] are multivariate Gaussian.  Let \mreti, 
\mfact be \bby{\ssiz}{\nlatf} and \bby{\ssiz}{\nfac} matrices
of \iid observations of the features and returns.
Let \sportWfnc{\mreti,\mfact} be an estimator 
satisfying the assumptions of
directional independence and residual independence. 
Let \bby{\nlatfhej}{\nlatf} matrix \hejG be chosen
independently of \mreti and \mfact. 

Define 
\begin{equation}
\Delpsnrsqopt{\eye,\hejG}
\defeq
\trace{\qiform{\svsig}{\pRegco}\pfacsig} - 
\trace{\qform{\wrapProj{\pvsig}{\hejG}}{\pRegco}\pfacsig}.
\end{equation}

Then
\begin{equation}
\E{\pQl{%
\wrapBracks{\eye[\nlatf] - \wrapProj{\pvsig}{\hejG}\pvsig}\sportWfnc{\mreti,\mfact}}} 
\le \frac{\sqrt{\ssiz}\Delpsnrsqopt{\eye,\hejG}}{\sqrt{\nfac\wrapParens{\nlatf -
\nlatfhej} - 1 + \ssiz\Delpsnrsqopt{\eye,\hejG}}}.
\end{equation}
\end{theorem}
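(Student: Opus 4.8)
The plan is to treat the hedged portfolio as a subspace-constrained estimator and apply \theoremref{qual_bound_three} verbatim, so that the only substantive work is (i) identifying the reduced dimension and (ii) showing that the effective squared \txtSNR of the hedged subspace is precisely $\Delpsnrsqopt{\eye,\hejG}$.

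First I would exhibit the constraint. Take $\zerJc = \hejG\pvsig$, a \bby{\nlatfhej}{\nlatf} matrix of rank \nlatfhej; because \pvsig is a population quantity, $\zerJc$ is independent of the data \mreti, \mfact, as \theoremref{qual_bound_three} requires. Using $\wrapProj{\pvsig}{\hejG} = \tr{\hejG}\minv{\wrapParens{\hejG\pvsig\tr{\hejG}}}\hejG$, one checks $\hejG\pvsig\wrapBracks{\eye[\nlatf] - \wrapProj{\pvsig}{\hejG}\pvsig} = \mzero$, so the hedged estimator $\wrapBracks{\eye[\nlatf] - \wrapProj{\pvsig}{\hejG}\pvsig}\sportWfnc{\mreti,\mfact}$ lies in the kernel of $\zerJc$ and serves as the constrained estimator of \theoremref{qual_bound_three}, inheriting directional and residual independence from \sportWfnc{\mreti,\mfact}. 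Since $\zerJc$ has rank \nlatfhej, the surviving subspace has dimension $\nlatfzer = \nlatf - \nlatfhej$, which supplies the $\nfac\wrapParens{\nlatf - \nlatfhej}$ in the denominator.

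The crux is the effective \txtSNR. Letting \zerJ be an orthonormal-row basis of the null space of $\zerJc$, I would pass to risk coordinates $\tilde{\sportw} = \trchol{\pvsig}\sportw$, in which the hedging constraint $\hejG\pvsig\sportw = \vzero$ becomes the \emph{Euclidean} orthogonality condition $\hejG\chol{\pvsig}\tilde{\sportw} = \vzero$, i.e. orthogonality to the column space of $\trchol{\pvsig}\tr{\hejG}$. Writing $P$ for the Euclidean projection onto that column space, the hedged subspace is its orthogonal complement, so \theoremref{qual_bound_three}'s quantity equals the squared Frobenius norm of $\wrapParens{\eye[\nlatf] - P}\prskMtx$ with $\prskMtx = \ichol{\pvsig}\pRegco\chol{\pfacsig}$; a Pythagorean split then gives $\psnrsqoptG{\zerJ} = \trace{\gramprskMtx} - \trace{\tr{\prskMtx}P\prskMtx}$.

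Finally I would collapse each term. The first is the unconstrained squared \txtSNR, $\trace{\gramprskMtx} = \trace{\qiform{\pvsig}{\pRegco}\pfacsig} = \psnrsqopt$, the leading term of $\Delpsnrsqopt{\eye,\hejG}$. For the second, substituting $P = \trchol{\pvsig}\tr{\hejG}\minv{\wrapParens{\hejG\pvsig\tr{\hejG}}}\hejG\chol{\pvsig}$ and $\prskMtx$ and using the cancellations $\trichol{\pvsig}\trchol{\pvsig} = \eye$ and $\chol{\pvsig}\ichol{\pvsig} = \eye$ collapses it to $\trace{\tr{\pRegco}\wrapProj{\pvsig}{\hejG}\pRegco\pfacsig} = \trace{\qform{\wrapProj{\pvsig}{\hejG}}{\pRegco}\pfacsig}$, the hedged-out term. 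Hence $\psnrsqoptG{\zerJ} = \Delpsnrsqopt{\eye,\hejG}$, and inserting this together with $\nlatfzer = \nlatf - \nlatfhej$ into the bound of \theoremref{qual_bound_three} gives the claim. The main obstacle is exactly this \txtSNR bookkeeping: everything rests on the observation that in risk coordinates the hedging constraint degenerates into a genuine orthogonal complement, which is what makes the Pythagorean ``full minus hedged-out'' decomposition legitimate.
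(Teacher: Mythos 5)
Correct, and essentially the paper's own argument: the paper likewise treats the hedged estimator $\wrapBracks{\eye[\nlatf] - \wrapProj{\pvsig}{\hejG}\pvsig}\sportWfnc{\mreti,\mfact}$ as a subspace-constrained estimator (noting its row rank is $\nlatf-\nlatfhej$, equivalently your observation that $\hejG\pvsig\wrapBracks{\eye[\nlatf] - \wrapProj{\pvsig}{\hejG}\pvsig} = \mzero$ with $\zerJc=\hejG\pvsig$ independent of the data) and applies \theoremref{qual_bound_three} ``outright.'' Your Pythagorean verification in risk coordinates that $\psnrsqoptG{\zerJ} = \Delpsnrsqopt{\eye,\hejG}$ simply makes explicit the bookkeeping the paper leaves implicit, and in passing confirms that the $\svsig$ appearing in the theorem's displayed definition of $\Delpsnrsqopt{\eye,\hejG}$ is a typo for $\pvsig$.
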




\section{Examples}

\subsection{The equal weight puzzle}

\theoremref{qual_bound} can help us make sense of puzzling findings
in the literature. For example, in the ``$1/N$'' paper, DeMiguel \etal
find that the equal-weighting portfolio outperforms, in terms of out-of-sample
\txtSR (and other measures), the \txtMP and numerous other portfolio
estimators.  \cite{demiguel2009optimal}
This finding is supported on a number of real world data
sets, and a few synthetic ones. One data set used was the returns of
the 10 industry portfolios and the US equity market portfolio, computed
by Ken French. 

The monthly returns, from 1927-01-01 to 
2013-08-01, for these 11 
assets were downloaded
from \emph{Quandl}.  \cite{Quandl}
The \txtSR of the equal weighted portfolio on the 
assets, over the 1040 months, is around 
$0.65\yrtomhalf$. The \txtSR of the sample \txtMP over
the 11 assets over the same period is around
$0.99\yrtomhalf$.  \cite{SharpeR-Manual}
Now consider a portfolio estimator
given 5 years of observations, as in 
DeMiguel \etal \cite{demiguel2009optimal}, assuming 
$\psnropt=0.99\yrtomhalf$. The bound on expected
value of \pql{\sportwfnc{\mreti}} from \theoremref{qual_bound}
is only $0.57\yrtomhalf$.  Under \apxref{qual_dist}, 
the probability that \pql{\sportwfnc{\mreti}} exceeds 
$0.65\yrtomhalf$ in this case is only
$0.33$. It is not surprising that DeMiguel \etal drew
the conclusions they did, nor that they would be refuted
by looking at a longer sample, as by Kritzman \etal \cite{defoopt2010}

One could also use \theoremref{qual_bound_four} here. However,
the upper bound of that theorem is non-negative, and zero only
if the quantity \Delpsnrsqopt{\eye,\hejG} is zero.  This is a
statement regarding unknown population parameters, but we can
perform inference on this quantity. For example, based on the
1040 months of data on these 
11, the 95\% confidence interval on
\Delpsnrsqopt{\eye,\hejG}, where \hejG is the \bby{1}{11}
matrix of all ones, is
$\asrowvec{0.18, 0.79}\yrto{-1}$, under the assumption of
Gaussian returns.  \cite{SharpeR-Manual}






\subsection{Empirical diversification in the S\&P 100}

To check how \psnropt \emph{might} scale with \nlatf, the weekly
log returns of the adjusted close prices of the stocks in the 
S\&P 100 Index, as of March 21, 2014, were downloaded 
from \emph{Quandl}.  \cite{Quandl} 
Adjustments for splits and
dividends were made in some unspecified way by the upstream source
of the data, Yahoo Finance. Stocks without a full 5 years of history
were discarded, leaving 96 stocks. 
Note that selection based on membership in the index at the end of the 
period adds no small amount of selection bias, which we shall ignore 
here.

Based on the weekly returns from 2009-03-27 to 2014-04-04,
estimates of \psnropt were computed, using the `KRS' estimator.
\cite{kubokawa1993estimation,SharpeR-Manual} 
This was performed on the first \nlatf assets, with \nlatf ranging from 
$1$ to $96$.  The estimate of \psnropt versus \nlatf
is plotted in \figref{sp100_grow}, with assets added in alphabetical order.
Because Apple appears at the beginning of this list, it appears that
\psnropt starts reasonably large, but then actually \emph{decreases}
when adding assets. This is an artifact of the estimator, since the true
\psnropt can only increase when adding assets. 

\begin{knitrout}\small
\definecolor{shadecolor}{rgb}{0.969, 0.969, 0.969}\color{fgcolor}\begin{figure}[]

\includegraphics[width=\maxwidth]{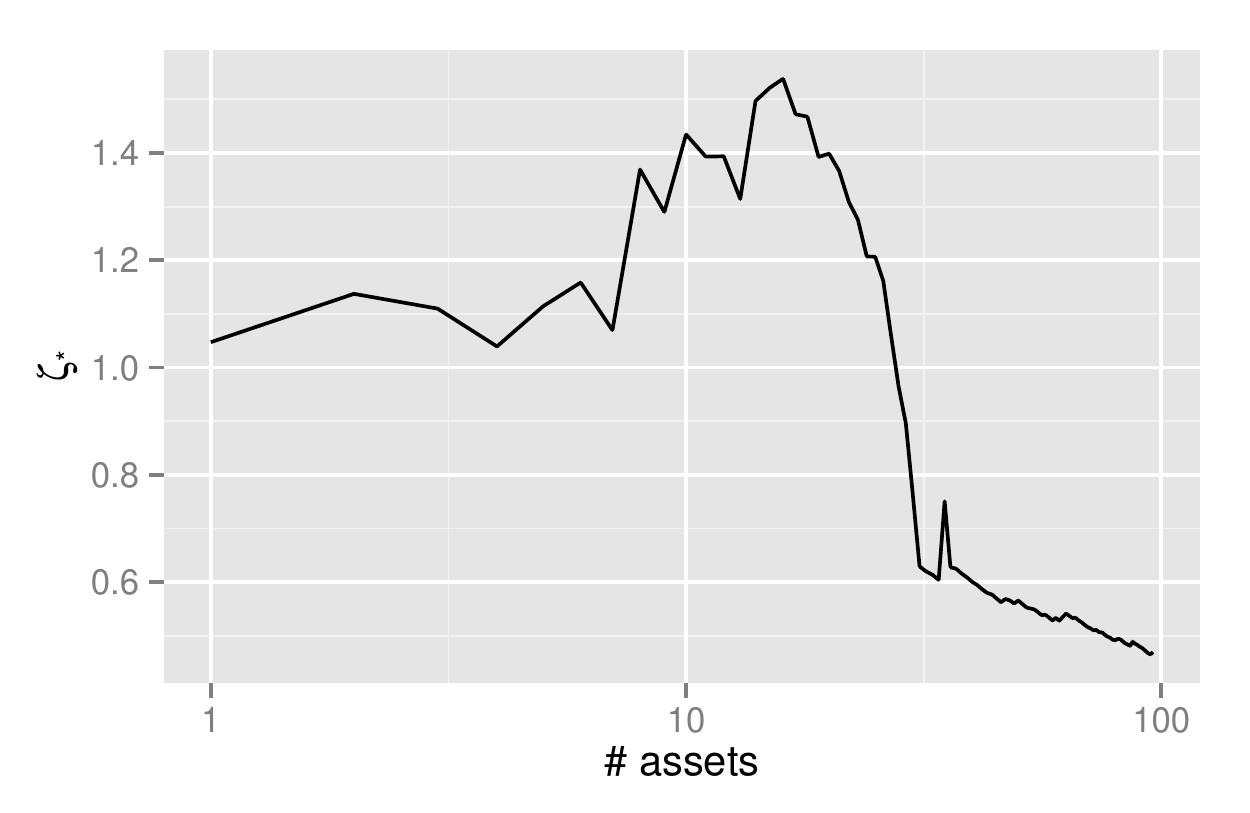} \caption[Growth of estimated \psnropt versus \nlatf for the S\&P 100 Index names, in alphabetical order, showing the `Apple effect]{Growth of estimated \psnropt versus \nlatf for the S\&P 100 Index names, in alphabetical order, showing the `Apple effect.'\label{fig:sp100_grow}}
\end{figure}

\end{knitrout}





Since the ordering of assets here is arbitrary, the experiment was repeated
1000 times, with the stocks randomly permuted, and 
\psnropt estimated as a function of \nlatf. Boxplots, over the
1000 simulations, of the KRS statistic versus
\nlatf are given in \figref{sp100_box}. There is effectively no 
diversification benefit observed here beyond the mean effect, which is
equivalent to holding an equal weight portfolio. Given the 
conditions under which \txtQual grows with \nlatf outlined in 
\secref{diversification}, one expects poor performance of 
directionally independent portfolio estimators
over even a small subset of the S\&P 100.

\begin{knitrout}\small
\definecolor{shadecolor}{rgb}{0.969, 0.969, 0.969}\color{fgcolor}\begin{figure}[]

\includegraphics[width=\maxwidth]{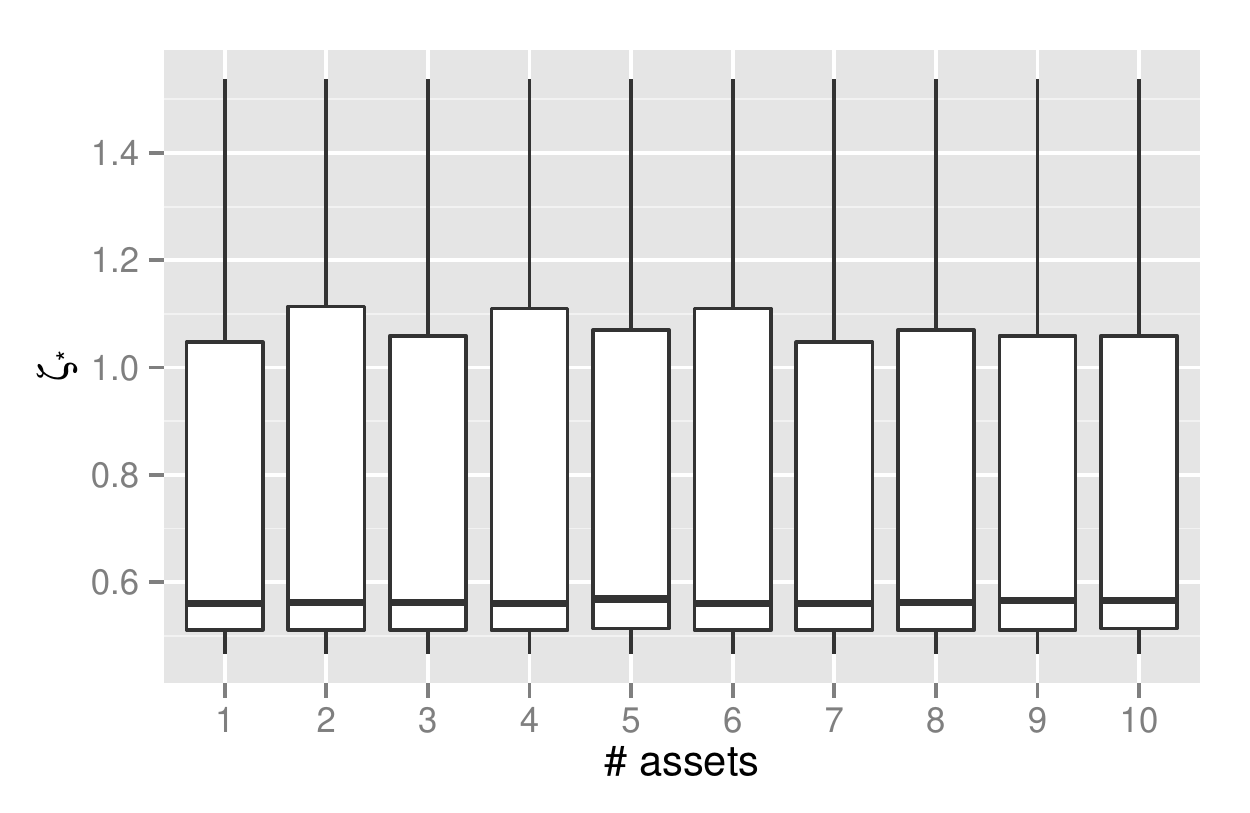} \caption[Growth of estimated \psnropt versus \nlatf for the S\&P 100 Index names is shown over 1000 permutations of the stocks]{Growth of estimated \psnropt versus \nlatf for the S\&P 100 Index names is shown over 1000 permutations of the stocks. There is effectively \emph{no} diversification benefit here beyond an equal weight portfolio.\label{fig:sp100_box}}
\end{figure}

\end{knitrout}

\section{Discussion}

Care should be taken in the interpretation of \theoremref{qual_bound},
or its generalizations from \secref{generalizations}.
It does not claim that the sample \txtMP is somehow `optimal,' nor
does it make comparative claims about different portfolio estimators
when presented with the same data.
The theorem does not imply that somehow `overfitting' to the observed
data can be mitigated by selecting a less desireable portfolio. 
It does not claim that sample estimates of the \txtQual of a portfolio
are useless.  It is trivially
the case, for example, that if $\pql{\sportw[1]} > \pql{\sportw[2]}$, then,
with probability greater than half, 
$\trAB{\sportw[1]}{\svmu} / \sqrt{\qform{\svsig}{\sportw[1]}} > 
\trAB{\sportw[2]}{\svmu} / \sqrt{\qform{\svsig}{\sportw[2]}}$, where
the probability is over draws of \svmu and \svsig.
The theorem does not claim that the expected
\txtQual of a portfolio estimator is negative. (Indeed, it can not,
since the portfolio estimator which generates a random portfolio, 
ignoring the data, has zero expected \txtQual). 
The theorem makes no claims (\eg providing a Bayesian posterior) about 
any particular portfolio based on a single sample of the data: it is
a statement about the expectation of the \emph{estimator} under replication 
of draws of the sample.

One should recognize, moreover, there are situations where the assumptions
of the theorem are violated. For example, in some cases a prior 
bias for positive expected returns, \ie $\pvmu \ge 0$, is warranted,
and thus a portfolio estimator with a long bias is chosen.
This can happen when the underlying assets are equities, and
the eligible universe is based on some minimum longevity, as this 
introduces a `good' survivorship bias: 
companies with negative expected return should founder and perish, 
leaving behind those with more positive $\pmu$. Effectively this
acts to boost \ssiz somewhat, although the effect is likely small.

There are other reasonable portfolio estimators which violate the 
assumption of Directional Independence. For example, an estimator
which performs some dimensionality reduction based on the observed
data, \mreti and \mfact will not be covered by 
\theoremref{qual_bound_three} since the subspace is chosen based
on the sample. However, it might not be covered by 
\theoremref{qual_bound_two} because the expected \txtQual might
depend on how \pRegco aligns with the leading eigenvectors of
\pvsig, say.

\subsection{Future work}

These findings perhaps raise more questions than they answer:
\begin{compactenum}
\item Foremost, the bounds of 
\theoremref{qual_bound} and \theoremref{qual_bound_two}
depend on the unknown
quantity, \psnrsqopt. How can we perform inference, 
Frequentist or Bayesian, on \pql{\sportwopt}, where \sportwopt is the \txtMP, 
given the observed information (\viz \svmu and \svsig)? This
is a problem of enormous practical concern to hundreds of
quantitative portfolio managers.

Contrast inference on the portfolio \txtQual with inference
on the population \txtSNR: under Gaussian returns, the
distribution of \ssrsqopt in terms of \ssiz, \nlatf and \psnrsqopt
is known. \cite[Theorem 5.2.2]{anderson2003introduction}
Thus, for example, the quantity
$\wrapParens{1 - \fracc{\nlatf}{\ssiz}}\ssrsqopt - \fracc{\nlatf}{\ssiz}$
is an unbiased estimator for \psnrsqopt, \etc 
Performing inverence on \pql{\sportwopt} is tricky because \psnrsqopt
is unknown and the error $\sportwopt - \pportwopt$ is likely
not independent of the error in the estimate \ssropt.

It may be the case, however, that inference on the portfolio
\txtQual qualifies as an `impossible' estimation-after-selection
problem. \cite{leeb2006}
\item While \theoremref{qual_bound} requires Gaussian returns, 
one expects that the result holds for returns distributions whose 
likelihood is ``more concave'' than the Gaussian at the MLE.
Exact conditions for this to hold should be established.
\item \theoremref{qual_bound_two} applies to the case of 
trading strategies where the portfolio is linear in the 
observable features, \vfact[i]. 
Can it be used as an approximate bound for trading
strategies which are nonlinear, complex functions of the features?
\item What can be said about scaling of \psnropt with respect to
\nlatf for different models of market returns? Can one establish
sane sufficient conditions for which \psnropt grows slower than
$\nlatf^{\oneby{4}}$? What is the analogue of \eqnref{capm_zetas}
for a multi-factor model of returns?
\item Can we find a lower bound, or a non-trivial upper bound on the 
variance of $\pql{\sportwfnc{\mreti}}$? Together these could be used
to give guarantees about the quantiles of \pql{\sportwfnc{\mreti}}.
A lower bound on the variance can likely be had via a result of
Kakarala and Watson. \cite{ANZS:ANZS253} Together with Cantelli's
Inequality, these would give rough (perhaps useless) upper bounds
on the 
\kth{\qlev} quantile of portfolio \txtQual, for $\half < \qlev < 1$.
\item How tight is the bound of \theoremref{qual_bound}, and can 
it be much improved by directly analyzing the differential
inequality of \eqnref{crb_three}, rather than discarding
the derivative term? Or perhaps the bound can be improved by using
an `intrinsic' \txtCR bound.  \cite{DBLP:conf/icassp/XavierB05}
\item How good is \apxref{qual_dist}? Can we find the expected
value of the distribution in \apxref{qual_dist}, and what is the
gap between it and the bound of \theoremref{qual_bound}?
Can we find the \emph{exact} distribution of \txtQual of the 
sample \txtMP under Gaussian returns, 
perhaps leveraging the work of Bodnar and Okhrin,
or of Britton-Jones.  \cite{SJOS:SJOS729, BrittenJones1999}
\item Can the assumption of Directional Independence be weakened?
Can the \theoremref{qual_bound_two} be generalized to deal with omitted 
variable bias in \vfact[i]?  
\item The analysis of \txtQual ignores the `risk-free' or `disastrous'
rate of return, and all trading costs. Can the expected bounds
be generalized to include these costs?
\end{compactenum}

\bibliographystyle{plainnat}
\bibliography{SharpeR,rauto}

\appendix




\end{document}